\documentclass[twoside,a4paper]{IEEEtran}
\usepackage[noadjust]{cite} 
\usepackage{lettrine}
\usepackage{multirow}
\usepackage{amsmath}
\usepackage{overpic}

\usepackage{enumitem}
\usepackage{amsopn}
\usepackage{amssymb}
\usepackage{accents}
\usepackage{mathrsfs}
\usepackage{pifont}
\newcommand{\crh}[1]{{\color{black}#1}}
\newcommand{\snr}{\mathsf{snr}}
\newcommand{\myvec}[1]{{\mathbf{#1}}}
\newcommand{\review}[1]{{\color{black}#1}}
\usepackage{bm}         
\usepackage{fix-cm}     

\usepackage{mleftright} 
\mleftright                 

\usepackage{mathdots}   

\makeatletter
\def\IEEElabelanchoreqn#1{\bgroup
	\def\@currentlabel{\p@equation\theequation}\relax
	\def\@currentHref{\@IEEEtheHrefequation}\label{#1}\relax
	\Hy@raisedlink{\hyper@anchorstart{\@currentHref}}\relax
	\Hy@raisedlink{\hyper@anchorend}\egroup}
\makeatother

\usepackage[utf8]{inputenc}
\usepackage[T1]{fontenc} 

\usepackage{tikz}
\usepackage{tikz}
\allowdisplaybreaks
\usetikzlibrary{datavisualization}

\usetikzlibrary{positioning, fit, calc} 
\tikzset{block/.style={draw, thick, minimum width=0.5cm, minimum height=0.5cm, align=center}, 
	line/.style={-latex}   
}

\usepackage{pgfplots}

\usepackage{subfigure}
\usepackage{algorithmicx}
\usepackage{algorithm}
\usepackage{algpseudocode}
\usepackage{amsthm}
\usepackage{booktabs}
\usepackage{balance}

\usepgfplotslibrary{units}
\usetikzlibrary{spy,backgrounds}
\usetikzlibrary{decorations.markings}
\usetikzlibrary{positioning}
\usetikzlibrary{shadows} 
\usetikzlibrary{calc}
\usetikzlibrary{patterns}
\usepackage{pgfplots}
\usepgfplotslibrary{units}
\usetikzlibrary{spy,backgrounds}
\usetikzlibrary{decorations.markings}
\usepackage{pgfplotstable}

\newtheorem{lemma}{Lemma}

\usepackage{hyphenat} 

\newcommand{\midk}[1]{\kern0.1em #1 \kern0.1em}
\newcommand{\middlek}[1]{\kern0.1em \middle#1 \kern0.1em}
\newcommand{\bigk}[1]{\kern-0.1em \bigm#1 \kern-0.1em}
\newcommand{\Bigk}[1]{\kern-0.1em \Bigm#1 \kern-0.1em}
\newcommand{\biggk}[1]{\kern-0.1em \biggm#1 \kern-0.1em}
\newcommand{\Biggk}[1]{\kern-0.1em \Biggm#1 \kern-0.1em}

\newcommand{\tn}[1]{\textnormal{#1}}

\newcommand{\mat}[1]{\mathsf{#1}} 
\newcommand{\trans}[1]{#1^{\textnormal{\textsf{\tiny T}}}} 

\newcommand{\hermi}[1]{#1^{\dagger}} 

\newcommand{\const}[1]{\textnormal{\usefont{U}{eur}{m}{n}\selectfont #1}} 
\newcommand{\relD}{\mathop{}\!\mathsf{D}}         
\newcommand{\relDf}[2]{\relD\left(#1 \kern0.1em\middle\|\kern0.1em #2\right)}
\newcommand{\erelDf}[2]{\relD(#1 \kern0.1em\|\kern0.1em #2)}
\newcommand{\bigrelDf}[2]{\relD\bigl(#1 \kern-0.1em \bigm\| \kern-0.1em#2\bigr)}
\newcommand{\BigrelDf}[2]{\relD\Bigl(#1 \kern-0.1em \Bigm\| \kern-0.1em#2\Bigr)}
\newcommand{\biggrelDf}[2]{\relD\biggl(#1 \kern-0.1em \biggm\| \kern-0.1em#2\biggr)}
\newcommand{\BiggrelDf}[2]{\relD\Biggl(#1 \kern-0.1em \Biggm\| \kern-0.1em#2\Biggr)}

\newcommand{\nr}{n_{\tn{R}}}

\newcommand{\Prvcond}[2]{\Pr\left[#1 \kern0.1em\middle|\kern0.1em #2\right]}
\newcommand{\ePrvcond}[2]{\Pr[#1 \kern0.1em|\kern0.1em #2]}
\newcommand{\bigPrvcond}[2]{\Pr\bigl[#1 \kern-0.1em \bigm| \kern-0.1em#2\bigr]}
\newcommand{\BigPrvcond}[2]{\Pr\Bigl[#1 \kern-0.1em \Bigm| \kern-0.1em#2\Bigr]}
\newcommand{\biggPrvcond}[2]{\Pr\biggl[#1 \kern-0.1em \biggm| \kern-0.1em#2\biggr]}
\newcommand{\BiggPrvcond}[2]{\Pr\Biggl[#1 \kern-0.1em \Biggm| \kern-0.1em#2\Biggr]}

\newcommand{\Prscond}[2]{\Pr\left(#1 \kern0.1em\middle|\kern0.1em #2\right)}
\newcommand{\ePrscond}[2]{\Pr(#1 \kern0.1em|\kern0.1em #2)}
\newcommand{\bigPrscond}[2]{\Pr\bigl(#1 \kern-0.1em \bigm| \kern-0.1em#2\bigr)}
\newcommand{\BigPrscond}[2]{\Pr\Bigl(#1 \kern-0.1em \Bigm| \kern-0.1em#2\Bigr)}
\newcommand{\biggPrscond}[2]{\Pr\biggl(#1 \kern-0.1em \biggm| \kern-0.1em#2\biggr)}
\newcommand{\BiggPrscond}[2]{\Pr\Biggl(#1 \kern-0.1em \Biggm| \kern-0.1em#2\Biggr)}

\newcommand{\Exp}{\operatorname{\textnormal{\textsf{E}}}}

\newcommand{\Econd}[3][]{\Exp_{#1}\left[#2 \kern0.1em\middle|\kern0.1em #3\right]}
\newcommand{\eEcond}[3][]{\Exp_{#1}[#2 \kern0.1em|\kern0.1em #3]}
\newcommand{\bigEcond}[3][]{\Exp_{#1}\bigl[#2 \kern-0.1em \bigm| \kern-0.1em #3\bigr]}
\newcommand{\BigEcond}[3][]{\Exp_{#1}\Bigl[#2 \kern-0.1em \Bigm| \kern-0.1em #3\Bigr]}
\newcommand{\biggEcond}[3][]{\Exp_{#1}\biggl[#2 \kern-0.1em \biggm| \kern-0.1em #3\biggr]}
\newcommand{\BiggEcond}[3][]{\Exp_{#1}\Biggl[#2 \kern-0.1em \Biggm| \kern-0.1em #3\Biggr]}

\makeatletter

\newcommand{\Rmnum}[1]{\expandafter\@slowromancap\romannumeral #1@}
\makeatother

\usepackage{threeparttable}

\usepackage{diagbox}
\usepackage{array}
\usepackage{boldline}
\usepackage{dsfont} 
\usepackage{algorithmicx}
\usepackage{algpseudocode}

\allowdisplaybreaks
\usetikzlibrary{datavisualization}

\usetikzlibrary{spy,backgrounds}
\usetikzlibrary{decorations.markings}

\usepackage{xcolor}
\usepackage{booktabs}
\begin{document}
	
	\title{Grouped Annulus-Modulated Transceiver Is Almost Full DoF-Achieving for RIS-Assisted Symbiotic Radios Over Spatial-Correlated Channels}
	
	\author{Ruo-Qi Sun, Jianfeng Shi, Yonggang Zhu, Mingliang Xie, Kang Luo, Yifu Sun, Ru-Han Chen$^{\ast}$, Kang An$^{\ast}$ 
		\thanks{This work is supported by Research Program of National University of Defense Technology under Grant No. ZK23-57, and the National Natural Science Foundation of China under Grant No. 62401593. R.-Q. Sun is with School of Electronics and Information Engineering, Nanjing University of Information Science and Technology, Nanjing, China, and also with Sixty-Third Research Institute, National University of Defense Technology, Nanjing, China (e-mail: 202412491635@nuist.edu.cn). J. Shi	is with School of Electronics and Information Engineering, Nanjing University of Information Science and Technology, Nanjing, China (e-mail: jianfeng.shi@nuist.edu.cn).
		Y. Zhu, Y. Sun, R.-H. Chen and K. An are with Sixty-Third Research Institute, National University of Defense Technology, Nanjing, China (e-mail: zhumaka1982@163.com, sunyifu\_nudt@163.com, tx\_rhc22@nudt.edu.cn, ankang89@nudt.edu.cn). M. Xie and K. Luo are with National Key Laboratory of Electromagnetic Energy,	Naval University of Engineering, Wuhan, China (18627191916@163.com, luokangemc@nue.edu.cn).  \textit{(Corresponding Authors: Ru-Han Chen and Kang An.)}}}

	\maketitle
	
	\begin{abstract}
	This paper considers a RIS-assisted symbiotic communication system, where additional information is conveyed by the passive reconfigurable intelligent surface (RIS).
	In existing schemes, individual phase modulation is usually adopted at the RIS elements, which severely limits exploiting all extra multiplexing gains brought by the RIS.
	To address the issue, we propose a novel matrix decomposition algorithm that transforms the equivalent channel into a structured form while effectively suppressing the decomposition residual. 
	Based on this, a novel transceiver architecture employing grouped annulus modulation (GAM) with a hexagonal-lattice-based constellation is developed, which is capable of achieving the full degrees of freedom (DoFs) when the decomposition algorithm performs as expected.
	Numerical results demonstrate that the proposed transceiver achieves much higher communication rates, thereby leading to higher spectral efficiency, compared to the conventional phase-only modulation scheme, while maintaining comparable error performance.
	\end{abstract}
\begin{IEEEkeywords}
Multi-antenna communication, multiplexing, phase modulation, spatially-correlated channels, symbiotic communications, reconfigurable intelligent surface.
\end{IEEEkeywords}

\section{Introduction}\label{Sec.introduction}
\lettrine[lines=2]{F}{UTURE} 
wireless networks require ultra-high data rates, improved energy efficiency, and massive connectivity, which has motivated extensive research on key enabling technologies, such as millimeter-wave (mmWave) communications and intelligent electromagnetic materials.
These emerging technologies also impose new requirements on the design of wireless transceiver architectures. 
In this context, reconfigurable intelligent surfaces (RISs), as a promising technology, have attracted significant attention due to their capability of programmably controlling the electromagnetic propagation environment \cite{Basar2024VTM,Wu2021Tcom}, and are expected to significantly enhance system performance while providing a new paradigm for next-generation transceiver design.

A natural use of the RIS in wireless communications is to employ it as a passive beamformer, which aims to maximize the received signal-to-noise ratio (SNR) by appropriately adjusting the phase shifts of its controllable elements \cite{Wu2019TWC,Zhao2025TMC,Xue2026TNCE,Liu2024TWC,Ge2025TWC}.
A large body of literature has been devoted to the joint optimization of the beamforming vectors at the RIS, the transmitter, and the receiver for various scenarios, e.g.,  single-user \cite{chen2025TWC} and multi-user multiple-input single-output (MISO) systems \cite{zhao2025Tcom}, downlink multi-user systems \cite{Liu2025TWC}, multiple-input and multiple-output (MIMO) systems \cite{Wang2024TWC}.

Recently, a more attractive use of the RIS for conveying additional information has been proposed under the concept of \textit{symbiotic radios}, also referred to as \textit{symbiotic active/passive transmission} or \textit{simultaneous active and passive information transfer}, which aims to enable mutually beneficial information exchange between active and passive transmissions via shared spectrum, energy, and infrastructure \cite{Liang2020TCCN}. 
From an information-theoretic perspective, one key motivation lies in the fact that employing the RIS purely as a passive beamformer cannot fully exploit the available multiplexing gains, except for the special case of rank-one channels, as revealed in \cite{Hei2024TIT} and our previous work \cite{chen2024twc}.
In RIS-assisted symbiotic communication systems, the phase shifts of the RIS are not only determined by the channel state but can also be jointly exploited with the active transmit antenna to convey information \cite{Li2023IOT,Jiang2023TSP}. When the RIS has data to transmit, it can modulate information by adjusting its phase shifts, while simultaneously enhancing the channel gain of the primary transmission link, thereby improving the overall system capacity \cite{Zhang2024TCCN}.
In addition, such passive transmission employs ultra-low-power devices and conveys information by reflecting electromagnetic signals from the active transmission link, thus avoiding the need for dedicated radio-frequency (RF) chains and additional spectrum resources \cite{Liang2020TCCN}. Based on this mechanism, the base station can be equipped with only a single active transmit antenna together with an RIS comprising a large number of reflecting elements. This enables the construction of a virtual large-scale MIMO system and a single-RF-chain MIMO architecture \cite{Li2021WC,Karasik2021ISIT}.
This architecture significantly reduces hardware complexity and energy consumption, while simplifying digital signal processing, thus providing an energy-efficient solution for future wireless communication systems. However, despite these advantages, existing RIS-assisted symbiotic communication schemes are still not sufficiently efficient, which calls for further investigation into more advanced transceiver designs.

\subsection{Related Works}
In recent years, transceiver design for symbiotic radio systems have been extensively studied in the literature from two main aspects.

\subsubsection{Characterization of Achievable Rates}
A central problem in symbiotic radio systems is the characterization of information-theoretic performance metrics, including bounds on channel capacities or degrees of freedom (DoFs) (i.e., the limiting pre-log factor of the achievable rate). 
It is shown in \cite{Seddik2022CL} that phase modulation at RIS elements can bring certain DoF gains for MIMO systems even under non-coherent detection. 
In \cite{Hei2024TIT}, the maximum DoF of symbiotic radios is characterized when information is jointly modulated on the transmitted signal and the RIS phase shifts.
Furthermore, the achievable rate of such joint transmission schemes is investigated under specific modulation formats in \cite{Karasik2021Tcom}.
\subsubsection{Practical Modulation Designs}
In \cite{Yao2024TWC,Yao2023TVT}, a superimposed modulation framework is developed, where information is embedded into a pre-optimized RIS phase pattern. 
To address the challenges arising from the high-dimensional constant-modulus constraint and reduce implementation complexity, a spatial sigma-delta modulation scheme is proposed in \cite{Keung2024ICASSP}, where a carefully designed phase sequence is employed at the RIS elements to achieve bit error rate (BER) performance comparable to that of the unquantized zero-forcing (ZF) scheme, while maintaining low computational complexity.	

Spatial modulation, or more generally index modulation, is investigated in \cite{Lin2021TWC,Basar2020Tcom}. In \cite{Lin2022TWC}, a quadrature reflection modulation scheme is proposed, where the RIS elements are partitioned into two subsets and information is conveyed via the index of the activated subset.
Furthermore, by properly adjusting the phase shifts at the RIS, information can also be conveyed through the receive antenna index modulation or via dual-ring phase shift keying (PSK) constellations, as shown in \cite{Lei2021WC,Wu2021TVT}.

Several studies focus on efficient joint coding and decoding strategies. In \cite{Yan2020JSAC}, a passive beamforming and information transmission architecture is proposed, which is further generalized in \cite{Jiang2023TSP}. In \cite{Li2026ITJ}, the authors propose an XOR-based space-time joint coding framework, which significantly improves the SNR and error performance of an RIS-assisted symbiotic communication system. 
In our previous work \cite{chen2024twc}, based on QR decomposition and successive interference cancellation (SIC), we design a transceiver architecture for the RIS-assisted single-input multiple-output (SIMO) system, which is shown to achieve full DoFs under spatially independent channel conditions.

\subsection{Motivation and Contributions}
It should be noted that most existing designs for RIS-assisted communication systems do not take the spatial correlation \review{\cite{Sun2021WCL}} among \review{phase-controllable} elements into account, which can significantly reduce achievable DoF \cite{philipp2026TAP}. 
Moreover, existing transceiver and modulation designs are generally not tailored to effectively exploit the structured characteristics of spatially correlated channels, making it difficult to fully recover the potential DoFs of RIS-assisted systems.Thus, \textit{how to achieve all potential DoFs of RIS-assisted system in the spatially-correlated regime} is a practical but fundamental problem. 
Addressing this challenge is indispensable for accelerating the design, deployment, and optimization of RIS-assisted symbiotic communication systems, which calls for a systematic transceiver design framework despite its challenging nature. 
To the best of the authors' knowledge, there has been no relevant research so far.

In this paper, we propose a novel transceiver architecture as well as a simple constellation design for the RIS-assisted symbiotic communication system, by which full DoFs can be achieved. The main contributions are summarized as follows:
	
	\begin{enumerate}
		\item \textit{Novel Transceiver Architecture:}
		Motivated by the expanded constellation geometry of the sum of multiple phase-modulated signals (see Lemma~\ref{lem:achievable_magnitude}), we propose a novel transceiver architecture that aims to convert the equivalent channel matrix into the one of a special row-echelon form via a unitary transformation, where the column indices of the first nonzero elements \review{in two adjacent rows} differ by $2$. 
		Theoretical analysis shows that, combined with SIC, this transceiver can achieve full DoFs if the matrix decomposition algorithm (used for searching the required unitary matrix) is well-designed.
	
		\item \textit{Low-Complexity Matrix Decomposition:} 
		To obtain the decomposition of the equivalent channel matrix with the residual error (compared with the ideal row-echelon form) being as small as possible, we propose a simple algorithm that computes a closed-form unit vector with least residual in each iteration. 
		Numerical results show the superiority of the proposed algorithm especially in the case where strong spatial correlation exists.
		
		\item \textit{Annular Constellation for Subchannels:}  
		With the aforementioned matrix decomposition method and SIC, the proposed transceiver divides the equivalent channel into a beamforming subchannel and several phase-modulated subchannels, each of which has two phase inputs thereby forming annular constellation geometry. 
		For these phase-modulated subchannels, we design a two-dimensional constellation from the hexagonal lattice, which perfectly fits the annular constellation geometry. 
		The fast methods for computing the cardinality of the annular constellation and decomposing the constellation points into two phase input signals are also given.
		Numerical results show that the proposed transceiver has significantly higher spectral efficiency than the existing transceiver architecture under the condition of comparable error performance.
		
	\end{enumerate}

\subsection{Organization and Notations}
\review{
The remainder of this paper is organized as follows.
Sec.~\ref{Sec.System Model} presents the system model with a special emphasis on the spatial correlation among the channel gains.
In Sec.~\ref{sec:transceiver_architecture}, we introduce the considered transceiver architecture based on unitary transformation and develop a transceiver architecture based on unitary rotation and SIC.
In Sec.~\ref{sec:GAM}, we propose a novel transceiver architecture as well as a low-complexity matrix decomposition algorithm, which is shown to achieve full DoFs under well-designed conditions.
Sec.~\ref{Sec.Annular Constellation} is devoted to the constellation design problem involved in our proposed transceiver, and fast algorithms for enumerating and decomposing constellation points are also given. 
Numerical results are provided in Sec.~\ref{Sec.Simulation result} to validate the superiority of our proposed transceiver architecture. In Sec.~\ref{Sec.Conclusion}, we conclude the paper and discuss some open problems.

\textit{Notations:} Unless otherwise specified, we use lower-case, boldface lower-case, upper-case, and boldface upper-case letters to denote deterministic scalars, deterministic vectors, random variables, and random vectors, respectively. For a complex number $s\in \mathbb{C}$, its angle and modulus are denoted by $\angle s$ and $\left| s \right|$, respectively. For a complex-valued vector $\myvec{x}$, $\trans{\myvec{x}}$, $\myvec{x}^{\dagger}$, $\left\|
\myvec{x} \right\|_2$, and $\mathsf{diag}(\myvec{x})$ denote its transpose, conjugate transpose, $\ell_2$-norm, and the diagonal matrix with diagonal elements given by $\myvec{x}$, respectively.
The expectation operator is denoted by $\mathbb{E}[\cdot]$.
The notation $\mathcal{CN}(\cdot,\cdot)$ denotes the circularly symmetric complex Gaussian distribution.
}

\section{System Model} \label{Sec.System Model}
\subsection{Original Model}
\review{In line with} \cite{chen2024twc}, we consider a single-input multiple-output wireless link with a single transmit antenna and $\nr$ receive antennas, where an RIS with $n$ reflective elements is deployed and a direct path between the transmitter and the receiver is allowed. See Fig.~\ref{fig.RIS_SIMO} for the system diagram.

\begin{figure}[!htp]
	\centering
	\includegraphics[width=0.95\columnwidth]{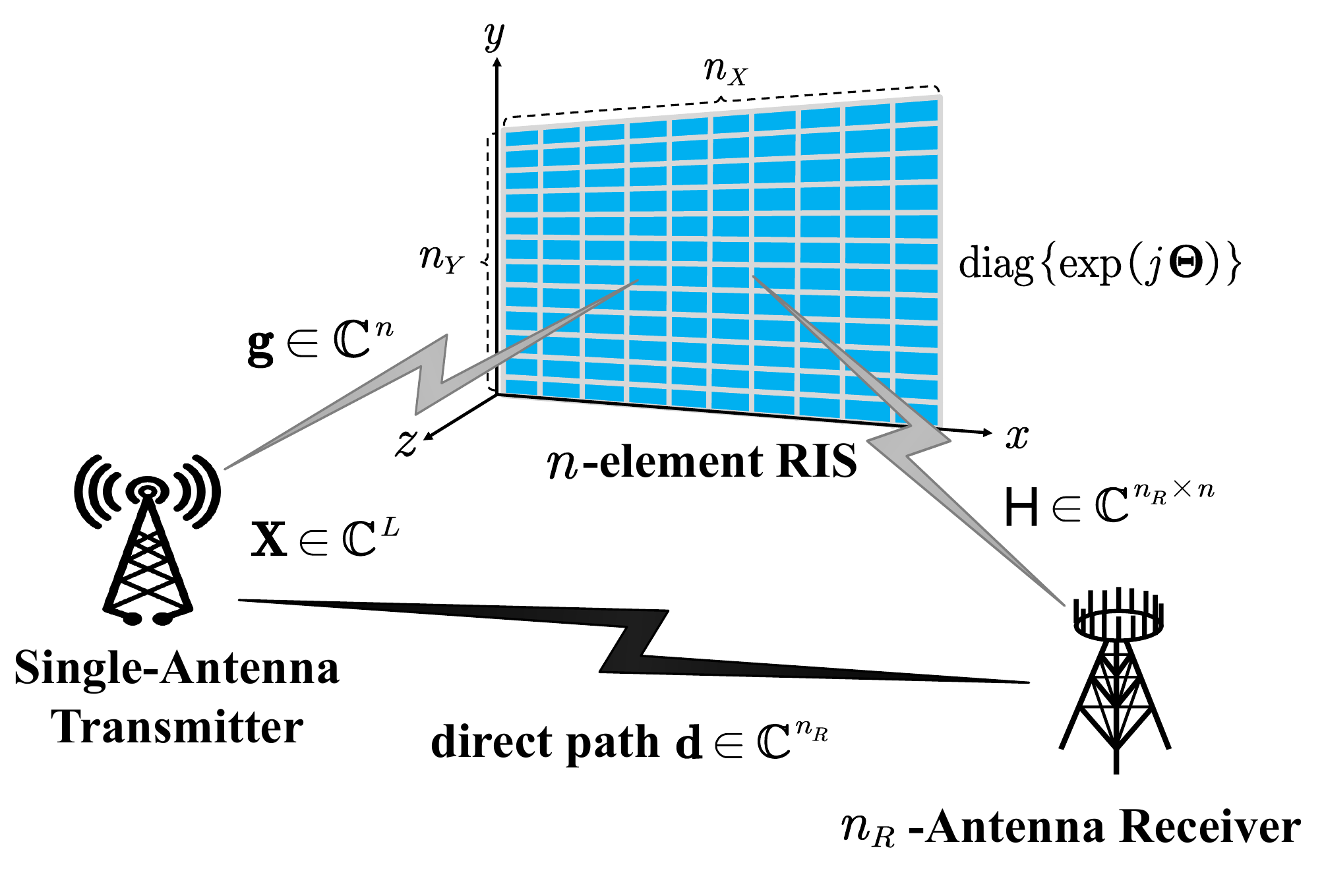}
	\caption{Diagram of the considered RIS-aided SIMO wireless communication system.}
	\label{fig.RIS_SIMO}
\end{figure}

Taking the limited re-configuration rate of the RIS into account, we denote the ratio of the symbol rate at the transmitter side and the re-configuration rate of the RIS by $L\in \mathbb{N}_{+}$. Then the received signal in a length-$L$ sub-block can be modeled as
\begin{equation}\label{eq:SIMO}
	\mathbb{Y}=
	\sqrt{\snr}\left(\mathsf{H}\cdot\mathsf{diag}\left\{\exp\left(j\bm{\Theta}\right)\right\}\cdot\mathbf{g} + \mathbf{d} \right)\trans{\mathbf{X}}+\mathbb{Z},
\end{equation}
where the joint channel input $\left(\mathbf{X},\bm{\Theta} \right)$ consists of two parts:
\begin{enumerate}
	\item the complex-valued random column vector $\mathbf{X}\in \mathbb{C}^{L}$ denotes the input signal in the sub-block from the transmit antenna, and the following trace constraint
	\begin{flalign}\label{eq:input_constraints}
		\mathbb{E}\left[{\hermi{\mathbf{X}}\mathbf{X}}\right]/L\le 1
	\end{flalign}
	is imposed on $\mathbf{X}$;
	\item the real-valued column vector $\bm{\Theta}=\trans{\left(\Theta_1,\cdots,\Theta_{n}\right)}\in [0,2\pi]^n$ denotes (possibly random) phase shifts at $n$ reflecting elements, and the unitary diagonal matrix $\mathsf{diag}\left\{\exp\left(j\bm{\Theta}\right)\right\}$ with the main diagonal entries $\exp\left(j\Theta_{1}\right)$, $\exp\left(j\Theta_{2}\right)$, $\cdots$, and $\exp\left(j\Theta_{n}\right)$
	denotes the reflecting coefficient matrix induced by the RIS;
\end{enumerate} 
where the deterministic matrix $\mathsf{H}\in \mathbb{C}^{\nr\times n}$, the column vector $\mathbf{g}\in \mathbb{C}^{n}$, and the column vector $\mathbf{d}\in \mathbb{C}^{\nr}$ denote constant channel gains from the RIS to the receive antennas, the transmitter to the RIS, and the transmitter directly to the receiver, respectively; where the random matrix $\mathbb{Z}=[Z_{ij}]\in \mathbb{C}^{\nr\times L}$ represents the channel noise with components being i.i.d. standard complex Gaussian random variables, i.e., $Z_{ij} \sim \mathcal{CN}(0,1)$; and where the positive scaling factor $\snr$ is used to represent the SNR of the considered system.

\subsection{Spatial Correlation}\label{sec:spatial_correlation}
As pointed out by \cite{Bjornson2021wcl}, modeling the wireless channel by independent Rayleigh fading may not capture the main characteristic of the RIS-aided wireless channels, especially spatial correlation caused by densely-packed phase-controllable elements.
	
With the assumption of isotropic scattering in the half-space in front of a planar RIS deployed on a two-dimensional rectangular grid with $n_X$ columns, $n_Y$ rows, and an inter-element spacing $d$, \cite[Sec.\Rmnum{3}-A]{Bjornson2021wcl} statistically models the channel gain $\trans{\myvec{h}}_k$ (row vector) between the $k$-th receive antenna and the RIS by
\begin{flalign}
	\myvec{h}_k \sim \mathcal{CN}(\bm{0}_{n},\mu_{{ \text{RR}}}^2\mathsf{R}),~\forall k\in[\nr],
\end{flalign}
where $n=n_{X}\cdot n_{Y}$ is the number of controllable elements, $\mu_{{ \text{RR}}}$ is the channel attenuation coefficient between the RIS and the receiver side, and the positive semi-definite matrix $\mathsf{R}=[r_{ij}]\in \mathbb{C}^{n\times n}$ can be determined as follows. 

For the $i$-th element and the $j$-th one respectively located at $(x_i,y_i,0)$ and  $(x_j,y_j,0)$, the corresponding spatial correlation coefficient $r_{ij}$ is given by
\begin{flalign}
	r_{ij} = \text{sinc} \bigg( \frac{2  \sqrt{(x_i-x_j)^2+(y_i-y_j)^2}}{\lambda} \bigg),~i,j\in[n],
\end{flalign}
where $\lambda$ is the wavelength of the propagating electromagnetic wave. 

In a similar way, we can model the channel gain $\myvec{g}$ between the single transmit antenna and the RIS by 
\begin{flalign}
	\mathbf{g} &\sim \mathcal{CN}(\mathbf{0}_{n}, \mu_{\text{TR}}^2 \mathsf{R}) ,
\end{flalign}
while, due to sufficient spacing at the receiver side, the channel gain  $\mathbf{d} $ of the line-of-sight link  follows 
\begin{flalign}
	\mathbf{d} &\sim \mathcal{CN}(\mathbf{0}_{\nr}, \mu_{\text{LOS}}^2 \mathsf{I}_{\nr}) ,
\end{flalign}
where $\mu_{{ \text{TR}}}$ denotes the channel attenuation coefficient between the transmit antenna and the RIS, $\mu_{{ \text{LOS}}}$ denotes that between the transmit antenna and the receiver side, and $\mathsf{I}_{\nr}$ denotes the $\nr \times \nr$ identity matrix.

\subsection{Equivalent Channel}
It has been shown in our previous work \cite{chen2024twc} that the original channel model~\eqref{eq:SIMO} can be simplified into the \textit{equivalent full-row-rank and purely-reflective model} as follows
	\begin{equation}\label{eq:equiv}
		\check{\mathbb{Y}}=\sqrt{\snr}\check{\mathsf{H}}\exp\left(j\check{\bm{\Theta}}\right)\trans{\check{\mathbf{X}}}+\check{\mathbb{Z}},
	\end{equation}
	where the random vector $\check{\mathbf{X}}$ is a phase-rotated channel input satisfying the original power constraint \eqref{eq:input_constraints}, the equivalent channel matrix $
	\check{\mathsf{H}}\triangleq\mathsf{diag}\left\{\rho_1,\cdots,\rho_{\tau}\right\}\cdot\hermi{\mathsf{V}_{1}} \in \mathbb{C}^{\tau \times \check{n}}$, the equivalent AWGN $\check{\mathbb{Z}}= [\check{Z}_{ij}]\in \mathbb{C}^{\tau \times L}$ with i.i.d. components $\check{Z}_{ij} \sim \mathcal{CN}\left( 0,1 \right)$ due to the rotation-invariant property of the AWGN, the integer $\tau$ denotes the rank of the equivalent channel matrix $\check{\mathsf{H}}$, and $\check{n}=n+1$ is the number of equivalent RIS elements (containing an extra virtual one accounting for the direct path). Due to the limited space, we refer the reader to \cite[Sec. \Rmnum{2}-B]{chen2024twc} for more details \review{on} computing the equivalent channel matrix $\check{\mathsf{H}}$. \review{Unless otherwise specified}, the remaining part is only concerned with the simplified model \eqref{eq:equiv}.

\section{Transceiver Architecture Based on Unitary Transformation}\label{sec:transceiver_architecture}

In this section, we first introduce a universal transceiver architecture for the multiple-antenna system, \review{which is based on the combination of unitary rotation of the channel matrix and successive interference cancellation.}

\subsection{Transceiver Architecture}
\review{To simplify the implementation}, we follow the convention, i.e., factoring the equivalent channel matrix $\check{\mathsf{H}}$ into a simple form via unitary rotation. To this end, we rewrite the equivalent channel matrix $\check{\mathsf{H}}$ as follows:
\begin{equation}
\check{\mathsf{H}} =\mathsf{B}\cdot \mathsf{C},
\end{equation}
where the \textit{rotation matrix} $\mathsf{B}$ is a $\tau \times \tau$ unitary matrix, the \textit{coefficient matrix} 
$\mathsf{C}=[c_{ij}] \in \mathbb{C}^{\tau \times \check{n}}$ is of row-echelon form, which will be specified later. From the consideration of regularization, we assume that the first nonzero element in each row of $\mathsf{C}$ is a positive number, whose index is denoted by $\bar{k}_i \ge i$ for each row index $i\in [\tau]$.

\subsubsection{Receiver}

The receiver first linearly transforms the channel output by the unitary matrix $ \mathsf{B}^{\dagger}$, and gets
\begin{flalign}
		\tilde{\mathbb{Y}} = \mathsf{B}^{\dagger}\check{\mathbb{Y}}= \sqrt{\snr}\cdot \mathsf{C} \cdot \exp\!\big(j\check{\boldsymbol{\Theta}}\big) \cdot \trans{\check{\mathbf{X}}} + \check{\mathbb{Z}},
\end{flalign}
which can be further regarded as $\tau$ vector subchannels as follows
\begin{equation}
	\trans{\tilde{\myvec{Y}}}_i=\sum_{k=\bar{k}_i}^{\check{n}}  \sqrt{\snr} c_{ik} \exp(j\check{\Theta}_k) \cdot \trans{\check{\mathbf{X}}} + \trans{\myvec{Z}}_i,\quad i \in [\tau].
\end{equation}
The receiver then successively decodes each subchannel from bottom to top. 

\subsubsection{Transmitter}
For simplification, we make the following two assumptions on the transmitter architecture:
\begin{itemize}
	\item[(A1)] The active vector signal $\check{\mathbf{X}}$ is independently and identically Gaussian distributed;
	\item[(A2)] The phase-shift elements appearing in the last subchannel are used to amplify the active signal $\check{\mathbf{X}}$ via beamforming, i.e.,  $\check{\Theta}_{k}=-\angle{c_{\tau k}}$ for $k\ge \tau$ and nonzero $c_{\tau k}$.
\end{itemize}
 
 \subsection{Performance Metric}
 In this paper, we measure the throughput performance of the considered symbiotic communication system by the achievable DoF (averaged over $L$ channel uses) of the above transceiver architecture, which is the sum of the achievable DoFs of $\tau$ individual subchannels, denoted by $\mathsf{DoF}_{i}$ for $i\in [\tau]$.

\review{Along the same lines as} \cite[Sec. \Rmnum{5}-A]{chen2024twc}, the $\tau$-th subchannel used for beamforming can be reduced to 
\begin{flalign}\label{eq:beamforming_subchannel}
		\trans{\tilde{\myvec{Y}}}_\tau= \sqrt{\snr}\Bigg(\sum_{k=\bar{k}_\tau}^{\check{n}} \left| c_{\tau k} \right| \Bigg)  \trans{\check{\mathbf{X}}} + \trans{\myvec{Z}}_\tau,
\end{flalign}
which has $\mathsf{DoF}_{\tau}=1$, and after the SIC, the $(\tau-1)$-th subchannel to the first one can be reduced to
\begin{flalign}\label{eq:subchannel}
	\tilde{{Y}}_{i}&=\sqrt{\snr}\left\|\check{\mathbf{x}}\right\|_2\left( \sum_{k=\bar{k}_i}^{\bar{k}_{i+1}-1}c_{ik}\exp(j \check{\Theta}_k
	) \right)+Z_{i}, 
\end{flalign}
which corresponds to $\mathsf{DoF}_{i}\in \left\{{1}/{L}, {2}/{L}\right\}$ for all $i \in [\tau-1]$; see Lemma~\ref{lem:achievable_magnitude} for more details. 
In the remaining part of this paper, we only consider the case with $L=1$ for simplicity \review{ and} the proposed scheme can be easily generalized to the case where $L\geq 2$.

%
%
%

\subsection{Existing Method}
In our previous work \cite{chen2024twc}, the proposed architecture uses the QR decomposition of $\check{\mathsf{H}}$, and hence, the induced matrix $\mathsf{C}$ is an upper triangular matrix with nonzero main diagonal elements, i.e., $\bar{k}_{i}=i$. Combined with SIC, the receiver can decode the phase-modulated information of the $i$-th RIS element from the $i$-th subchannel corresponding to the $i$-th row of the channel output $\check{\mathbb{Y}}$ from $i=\tau-1$ to $i=1$, i.e., in a bottom-to-top manner. \crh{A main drawback is that only a half DoF is achievable by each subchannel (i.e., $\mathsf{DoF}_{i}={1}/{2}$ for all $i\in [\tau-1]$), except for the $\tau$-th one which adopts the beamforming scheme via the remaining $\check{n}-\tau+1$ phase-shift elements.}

For this reason, the total achievable DoFs of the QR-SIC transceiver (with the Gaussian codebook at the transmitter side) are given by
\begin{flalign}
	\mathsf{DoF}_{\text{QR-SIC}}=1+\frac{\tau-1}{2},
\end{flalign}
which is strictly less than the maximum achievable DoF \cite{Hei2024TIT} 
\begin{flalign}
	\mathsf{DoF}_{\max}=\min \left\{ \tau,  \frac{\check{n}+1}{2}\right\}
\end{flalign}
except for the special case of $\tau=\check{n}$, i.e., spatially independent channels. This observation motivates our modulation scheme, proposed in the next section, to exploit the multiplexing potential in the spatial-correlated regime.

\section{Grouped Annulus-Modulated Transceiver}\label{sec:GAM}

\review{Before formally introducing our proposed transceiver architecture}, we present a heuristic lemma that motivates our design.

\subsection{Constellation Geometry for Phase Modulation}

The following lemma characterizes the constellation geometry formed by the linear combination of several phase-modulated signals.

\begin{lemma}
	\label{lem:achievable_magnitude}
	Given a complex column vector $\mathbf{p} \in \mathbb{C}^m$ ($m\ge 2$), denote the optimal value of the following problem
	\begin{subequations}
		\begin{align}
			&\min && \bigg| \sum_{i=1}^m \epsilon_i |p_i|  \bigg|\\
			&\text{s.t.}&&\epsilon_i\in \left\{-1,1\right\},~\forall i \in[m],
		\end{align}
	\end{subequations}
	by $\const{r}_{\min}$ and $\sum_{i=1}^m |p_i| $ by $\const{r}_{\max}$. Then we have
	\begin{flalign}\label{eq:modulus_n2}
		\mathcal{P} &= \{\trans{\myvec{p}} \exp(j\bm{\theta}) \mid \bm{\theta} \in [0, 2\pi]^m \} \nonumber \\
		&= \{\rho \exp(j\phi) \mid \rho \in [\const{r}_{\min},\const{r}_{\max}] \text{ and } \phi \in [0, 2\pi] \}.
	\end{flalign}
\end{lemma}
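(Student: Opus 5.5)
The plan is to separate the claim into an angular part and a radial part. The angular part is that $\mathcal{P}$ is invariant under rotations about the origin. The radial part is that the set of attainable moduli $\{|\trans{\myvec{p}}\exp(j\bm{\theta})|\}$ is a closed interval with the stated endpoints.

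\textbf{Step 1 (angular part).} Without loss of generality we may write $p_i=|p_i|e^{j\alpha_i}$ and absorb $\alpha_i$ into $\theta_i$. This reduces the problem to sums $\sum_i |p_i| e^{j\theta_i}$. If we replace $\bm\theta$ by $\bm\theta+\phi\mathbf{1}$ (mod $2\pi$), every point of $\mathcal{P}$ is multiplied by $e^{j\phi}$. Hence $\mathcal{P}$ is a union of full circles centered at the origin and is determined by its modulus set $\mathcal{R}$.

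\textbf{Step 2 (radial part is an interval).} The map $\bm\theta\mapsto|\sum_i|p_i|e^{j\theta_i}|$ is continuous on the compact, connected torus $[0,2\pi]^m$. Its image $\mathcal{R}$ is therefore a compact interval $[\underline{r},\overline{r}]$. The triangle inequality gives $\overline{r}\le\const{r}_{\max}$, and taking all $\theta_i$ equal attains it, so $\overline{r}=\const{r}_{\max}$.

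\textbf{Step 3 (the inner radius; main obstacle).} This is the delicate step. I expect the true inner radius to be
\[
\underline{r}=\max\Bigl\{0,\;2\max_k|p_k|-\const{r}_{\max}\Bigr\}.
\]
The lower bound comes from the reverse triangle inequality. Let $k$ be the index of the largest $|p_k|$. Then $|\sum_i|p_i|e^{j\theta_i}|\ge |p_k|-\sum_{i\ne k}|p_i|$, and this bound is attained by choosing $\theta_i=\theta_k+\pi$ for all $i\ne k$.

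When $2\max_k|p_k|\le\const{r}_{\max}$, I would show that $0\in\mathcal{R}$ by induction on $m$. The case $m=2$ forces $|p_1|=|p_2|$, so antipodal phases give $0$. For $m\ge3$, the sums of the $m-1$ vectors other than $p_k$ form, by the induction hypothesis and Steps 1--2, an annulus whose outer radius is $\sum_{i\neq k}|p_i|\ge|p_k|$. Its inner radius is at most $|p_k|$: either it is $0$, or it equals $2|p_l|-\sum_{i\ne k}|p_i|\le |p_k|$ for the largest remaining index $l$. So some sum of the remaining vectors equals $-|p_k|e^{j\theta_k}$, and the total is zero.

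\textbf{Step 4 (comparison with $\const{r}_{\min}$).} For $m=2$, which is the case used by the grouped subchannels \eqref{eq:subchannel}, the sign-minimization gives $\const{r}_{\min}=\bigl||p_1|-|p_2|\bigr|=\max\{0,2\max_k|p_k|-\const{r}_{\max}\}$. The lemma then follows exactly.

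For $m\ge3$, however, I would flag that the sign-minimization definition of $\const{r}_{\min}$ can exceed the true inner radius. For example, $|p_i|=1$ for $i=1,2,3$ gives $\const{r}_{\min}=1$, yet three unit vectors at mutual angles $2\pi/3$ sum to $0$. So in the general case I would prove the statement with $\const{r}_{\min}$ replaced by $\max\{0,2\max_k|p_k|-\const{r}_{\max}\}$. The two agree whenever a single $|p_k|$ dominates, and the inclusion $\mathcal{P}\supseteq$ annulus with radii $[\const{r}_{\min},\const{r}_{\max}]$ still holds for every $m$, since $\const{r}_{\min}\ge\underline{r}$.
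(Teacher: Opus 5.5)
Your proof is correct, and it is more careful than the paper's; it also shows that the lemma as stated is false for $m\ge 3$.

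\textbf{Comparison of routes.} The paper shares your Step~1: rotation invariance and reduction to a nonnegative real $\myvec{p}$. It then treats only $m=2$, writing the modulus as $\sqrt{|p_1|^2+|p_2|^2+2|p_1||p_2|\cos(\theta_1-\theta_2)}$, which sweeps $[\,||p_1|-|p_2||,\,|p_1|+|p_2|\,]$. It asserts the general case ``by mathematical induction.'' You instead get the interval structure for every $m$ at once from connectedness of the torus. You pin the outer radius with the triangle inequality, and the inner radius with the reverse triangle inequality plus an inductive zero-sum construction.

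\textbf{Your counterexample.} It is right, and it shows that the paper's induction cannot be completed. For $|p_1|=|p_2|=|p_3|=1$ one has $\const{r}_{\min}=1$, yet $1+e^{j2\pi/3}+e^{j4\pi/3}=0\in\mathcal{P}$. Carrying out the natural induction step honestly gives a different endpoint. An annulus with radii $r_1\le r_2$ plus a circle of radius $a$ is the annulus with radii $\max\{0,r_1-a,a-r_2\}$ and $r_2+a$. Iterating this reproduces your inner radius $\max\{0,2\max_k|p_k|-\const{r}_{\max}\}$, not $\const{r}_{\min}$.

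\textbf{What survives.} The paper only uses the case $m=2$, namely the annuli $\mathcal{A}_i$ with inner radius $||c_{i,2i-1}|-|c_{i,2i}||$. It also uses the qualitative fact that the set is two-dimensional once two entries are nonzero. Your corrected version supports both. Your observation that $\const{r}_{\min}\ge\underline{r}$ also holds, since each sign pattern is a phase choice in $\{0,\pi\}^m$. It correctly preserves the inclusion of the stated annulus in $\mathcal{P}$ for all $m$.
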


\begin{IEEEproof}
	Note that the set $\mathcal{P}$ is rotation invariant by showing that, for any $\phi \in [0,2\pi]$, $\rho \exp(j\phi) \in \mathcal{P}$ if $\rho \in \mathcal{P}$. Thus, we only need to focus on the modulus set $\mathcal{A}$. Without loss of generality, we can let the complex vector $\myvec{p}$ be nonnegative real vector $\trans{(|p_1|, |p_2|, \cdots, |p_m|)}$.
	
	We first consider the simplest case, where  $m=2$ and  the modulus can be therefore expressed as
	\begin{flalign}
		&||p_1|\exp(j\theta_1) + |p_2|\exp(j\theta_2)| \nonumber \\
		&~= \sqrt{(|p_1|\cos\theta_1 + |p_2|\cos\theta_2)^2 + (|p_1|\sin\theta_1 + |p_2|\sin\theta_2)^2} \\
		&~= \sqrt{|p_1|^2 + |p_2|^2 + 2|p_1||p_2|\cos(\theta_1 - \theta_2)} \\
		&~\in [|(|p_1| - |p_2|)|, |p_1| + |p_2|],
	\end{flalign}
	i.e., confirming Eq.~\eqref{eq:modulus_n2}. Then we can conclude the lemma by mathematical induction.
\end{IEEEproof}


It is straightforward to see that there is a phase transition between the case of $m=1$ and that of $m\ge 2$. In the former case, the constellation geometry is merely a one-dimensional complex circle, while in the latter case, the constellation geometry is a two-dimensional annulus. See Fig.~\ref{Fig:constellation_geometry} for illustration. 

\crh{
\subsection{Main Idea of Grouped Annulus-Modulated Transceiver}

As seen from Lemma~\ref{lem:achievable_magnitude},} an encouraging fact is that if we group at least two elements in the  subchannel \eqref{eq:subchannel}, its achievable DoF can be doubled as compared with that of the QR-SIC transceiver proposed in \cite{chen2024twc}. 
To this end, we shall propose an unconventional matrix decomposition method that unitarily transforms the equivalent channel matrix $\check{\mathsf{H}}$ into a special row-echelon form, where each row has two more non-zero entries than the row below it. 
For more clarity, please see Fig.~\ref{Fig:row-echelon form}. It can be seen from Eq.~\eqref{eq:subchannel} and Fig.~\ref{fig:row_echelon_QR} that the QR decomposition leads to a strictly upper-triangular row-echelon form, and hence, yields $\tau-1$ phase-modulated subchannels with circular constellation geometry, while the proposed decomposition method leads to a specific row-echelon form with two dominant coefficients per row, thereby forming the subchannels (after SIC) as follows
\begin{flalign}
	\tilde{Y}_i &=\sqrt{\snr} \| \check {\mathbf{x}} \|_2
	\big(c_{i,2i-1} \exp(j \check{\Theta}_{2i-1})  \nonumber\\
	&\quad\quad\quad~~+  c_{i,2i} \exp(j \check{\Theta}_{2i})
	\big) +Z_i,~i\in[\tau-1],
	\label{eq:effective_subchannel}
\end{flalign}
with an annular constellation geometry with respect to the equivalent transmitted signal $c_{i,2i-1} \exp(j \check{\Theta}_{2i-1})+c_{i,2i} \exp(j \check{\Theta}_{2i})$. 

\crh{
In this paper, we refer to the transceiver architecture using the special coefficient matrix $\mat{C}$ (see Fig.~\ref{fig:row_echelon_proposed}) and SIC as \textit{the grouped annulus-modulated (GAM) transceiver}. }

\begin{figure}[t]
	\centering
	\subfigure[$m=1$]{\label{fig:geometry_m1}
		\begin{minipage}{0.40\textwidth}
			\includegraphics[width=\linewidth]{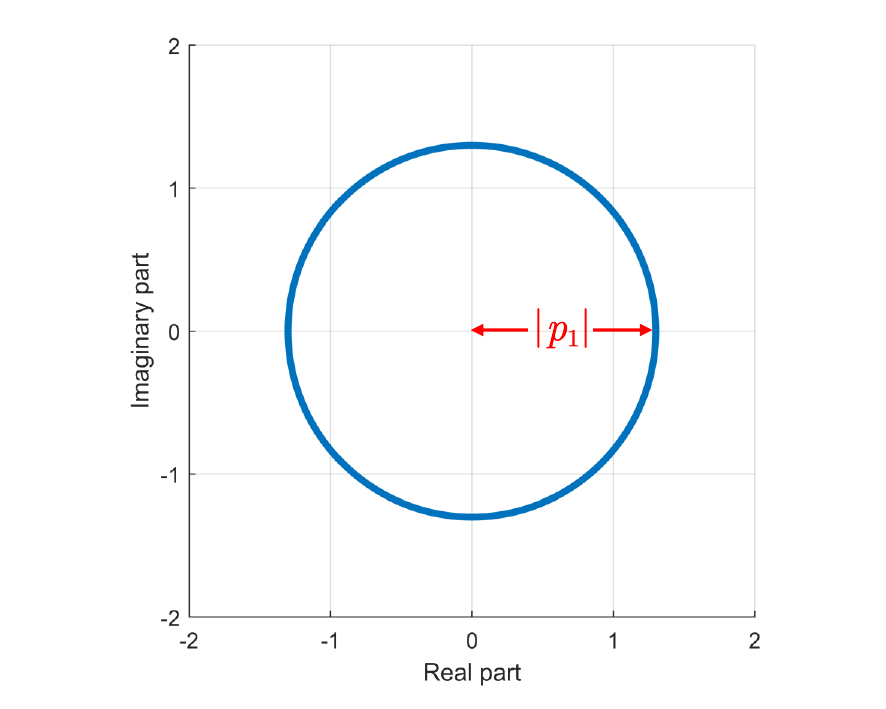}
		\end{minipage}
	}
	
	\subfigure[$m\ge2$]{\label{fig:geometry_m_ge_2}
		\begin{minipage}{0.40\textwidth}
			\includegraphics[width=\linewidth]{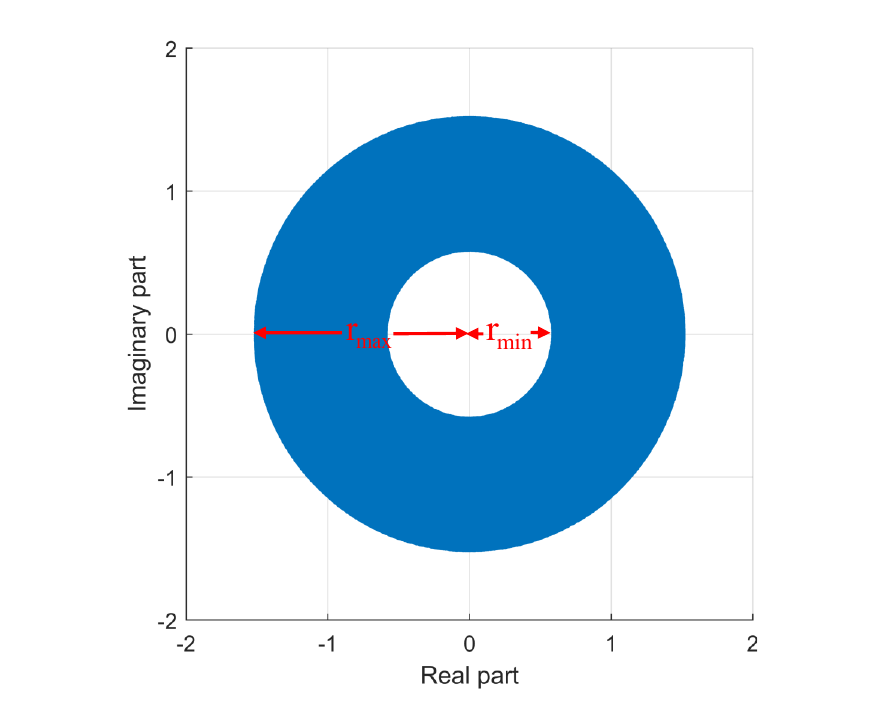}
		\end{minipage}
	}
	\caption{Constellation geometries of the linear combination of $m$ phase-modulated signals.} \label{Fig:constellation_geometry}
\end{figure}

\begin{figure}[t]
	\centering
	\subfigure[The coefficient matrix $\mat{R}$ via QR Decomposition~\cite{chen2024twc}]{
		\begin{minipage}{0.45\textwidth}
			\includegraphics[width=\linewidth]{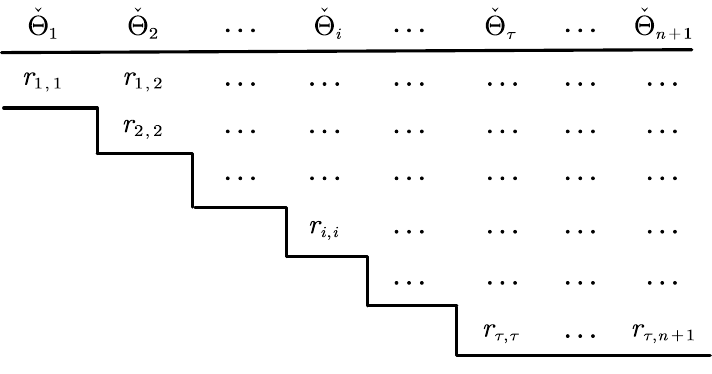}\label{fig:row_echelon_QR}
		\end{minipage}
	}
	
	\subfigure[The coefficient matrix $\mat{C}$ expected in this paper]{
		\begin{minipage}{0.45\textwidth}
			\includegraphics[width=\linewidth]{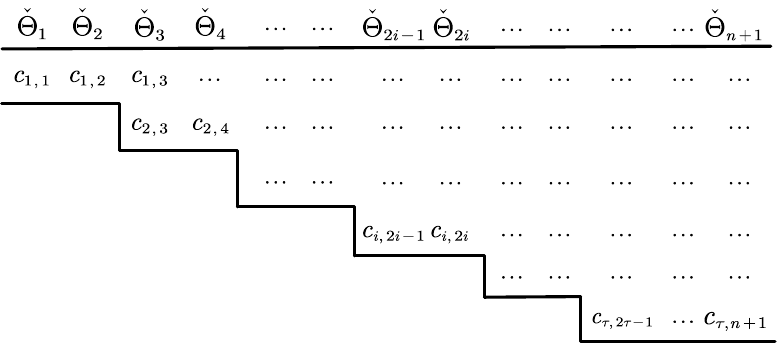}\label{fig:row_echelon_proposed}
		\end{minipage}
	}
	\caption{Two different types of coefficient matrices in	row-echelon forms.} \label{Fig:row-echelon form}
\end{figure}

\subsection{Least-Residual Matrix Decomposition Algorithm}
This subsection is devoted to the matrix decomposition algorithm for the proposed transceiver. Based on the basic knowledge of linear algebra, a full-row-rank $\tau\times \check{n}$ matrix should satisfy certain correlation conditions if it can be unitarily transformed into the special row-echelon form in Fig.~\ref{fig:row_echelon_proposed}. For this reason, we will first formulate the matrix decomposition problem of interest, where the residual error (compared with the ideal row-echelon form in Fig.~\ref{fig:row_echelon_proposed}) is taken into account.

\subsubsection{Problem Formulation}
We define the residual error as the sum of squares of the entries hoped to be zero. Then the least-residual unitary matrix $\mathsf{B}^{\star}$ is the solution to the following optimization problem:
	\begin{subequations}\label{prob:main}
	\begin{align}
		&\min_{\mathsf{B},\,\mathsf{P}} && \sum_{i=1}^{\tau}\sum_{j=1}^{2(i-1)} \Big( \trans{\myvec{e}}_i
		\big(
		\mathsf{B}^{\dagger} \check{\mathsf{H}}
		\mathsf{P}
		\big)
		\myvec{e}_j
		\Big)^2
		\\
		&\text{ s.t.}&&~~~~~\mathsf{B}\in {U}(\tau),\,\mathsf{P}\in P(\check{n})
	\end{align}
\end{subequations}
where $U(\tau)$ denotes the unitary group consisting of all $\tau\times \tau$ unitary matrices, and $\mathcal{P}(\check{n})$ denotes the permutation group of all $\check{n} \times \check{n}$ permutation matrices. Here we introduce the permutation matrix $\mathsf{P}$ to account for the freedom of arbitrarily sorting $\check{n}$ equivalent RIS elements.

Note that the problem~\eqref{prob:main} constitutes a hybrid combinatorial nonconvex  optimization problem, for which a closed-form solution seems to be intractable. Nevertheless, in this paper we propose a low-complexity algorithm, termed \textit{the combinatorially pairing (CP) algorithm}, which can obtain a near-ideal coefficient matrix especially for the case with large $\check{n}$. Before presenting the proposed algorithm, we first give a heuristic one that serves as a benchmark scheme. 

\subsubsection{Heuristic Algorithms}\label{sec:Heuristic}
Since there is no existing method that can be referred to, we set the following two heuristic methods as the benchmark. 
\begin{itemize}
	\item Random Rotation. We uniformly generate $10000$ random rotation matrices $\mat{B}$ from the unitary group $U(\tau)$ with respect to the Haar measure, and then choose the best one leading to  least residual error of its corresponding coefficient matrix.
	\item Gram-Schmidt Orthogonalization. In the $i$-th step ($i\in[\tau]$), we normalize the first vector among the remaining $(\check{n}-2(i-1))$ column vectors as the required unit basis vector in the $i$-th column of the factor matrix $\mathsf{B}$,  and then we update the remaining vectors via Gram-Schmidt \review{orthogonalization process}.
\end{itemize}

\subsubsection{Combinatorially Pairing Algorithm}
In the subsequent section, we present the CP algorithm that iteratively computes a least-residual unit-norm basis vector among all pairs of the remaining column vectors, instead of directly choosing one of them and then using Gram-Schmidt orthogonalization.

Given a set of input vectors $\{\mathbf{a}_1, \mathbf{a}_2, \dots, \mathbf{a}_n\}$, the core idea of the CP algorithm is  to find a unit vector $\mathbf{b}$ such that the sum of the two smallest 
projection errors of $\{\mathbf{a}_i\}$ onto $\mathbf{b}$ is minimized. 
Meanwhile, the corresponding error values for each pair are recorded 
for subsequent basis construction.

Before a formal presentation of the proposed algorithm, we first give the following lemma.

\begin{lemma}
	\label{lem:complex matrix}
	Given two nonzero $\tau$-dimensional column vectors $\myvec{a}_1,\myvec{a}_2\in\mathbb C^{\tau}$, the optimal unit-norm vector
	$\mathbf b^\star$ that minimizes the total projection
	error $\left\|  \myvec{a}_1 - (\myvec{b}^{\dagger}\myvec{a}_1)\myvec{b} \right\|_2^2
	+\left\|  \myvec{a}_2 - (\myvec{b}^{\dagger}\myvec{a}_2)\myvec{b} \right\|_2^2
	$ is given by the dominant eigenvector of the Gram matrix
	$\mathsf{A} \mathsf{A}^{\dagger}$, where $\mathsf{A}=[\myvec{a}_1,\myvec{a}_2]\in\mathbb{C}^{\tau\times 2}$.
\end{lemma}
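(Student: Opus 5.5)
We reduce the total projection error to a Rayleigh quotient and then apply the Courant--Fischer (Rayleigh--Ritz) characterization of the largest eigenvalue of a Hermitian matrix.

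\textbf{Step 1 (one-vector identity).} Fix a unit vector $\myvec{b}$, so $\myvec{b}^{\dagger}\myvec{b}=1$. For each $k\in\{1,2\}$, the vector $(\myvec{b}^{\dagger}\myvec{a}_k)\myvec{b}$ is the orthogonal projection of $\myvec{a}_k$ onto $\mathrm{span}\{\myvec{b}\}$. The residual $\myvec{a}_k-(\myvec{b}^{\dagger}\myvec{a}_k)\myvec{b}$ is therefore orthogonal to $\myvec{b}$. By the Pythagorean theorem,
\begin{flalign}
\left\| \myvec{a}_k-(\myvec{b}^{\dagger}\myvec{a}_k)\myvec{b}\right\|_2^2=\|\myvec{a}_k\|_2^2-|\myvec{b}^{\dagger}\myvec{a}_k|^2 .
\end{flalign}
The same identity also follows by direct expansion, using $\myvec{b}^{\dagger}\myvec{b}=1$ to cancel the two cross terms against the squared projection term.

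\textbf{Step 2 (sum over the pair).} Adding the two identities, the objective equals
\begin{flalign}
\|\myvec{a}_1\|_2^2+\|\myvec{a}_2\|_2^2-\left(|\myvec{b}^{\dagger}\myvec{a}_1|^2+|\myvec{b}^{\dagger}\myvec{a}_2|^2\right).
\end{flalign}
The subtracted term can be rewritten as
\begin{flalign}
\|\mathsf{A}^{\dagger}\myvec{b}\|_2^2=\myvec{b}^{\dagger}\mathsf{A}\mathsf{A}^{\dagger}\myvec{b},\qquad \mathsf{A}=[\myvec{a}_1,\myvec{a}_2].
\end{flalign}
The first two terms do not depend on $\myvec{b}$. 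Hence minimizing the total error over unit vectors is equivalent to maximizing the Rayleigh quotient $\myvec{b}^{\dagger}\mathsf{A}\mathsf{A}^{\dagger}\myvec{b}$ subject to $\|\myvec{b}\|_2=1$.

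\textbf{Step 3 (spectral argument).} The matrix $\mathsf{A}\mathsf{A}^{\dagger}$ is Hermitian and positive semidefinite. We write its eigendecomposition as $\mathsf{U}\,\mathsf{diag}\{\lambda_1,\dots,\lambda_\tau\}\,\mathsf{U}^{\dagger}$ with $\lambda_1\ge\lambda_2\ge\dots\ge\lambda_\tau\ge 0$. Set $\myvec{w}=\mathsf{U}^{\dagger}\myvec{b}$, which is again a unit vector because $\mathsf{U}$ is unitary. Then
\begin{flalign}
\myvec{b}^{\dagger}\mathsf{A}\mathsf{A}^{\dagger}\myvec{b}=\sum_i \lambda_i|w_i|^2\le\lambda_1 ,
\end{flalign}
since the weights $|w_i|^2$ are nonnegative and sum to one. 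Equality holds exactly when $\myvec{w}$ is supported on the eigenspace of $\lambda_1$. The maximizers are therefore the unit dominant eigenvectors, defined up to a unit-modulus phase factor, which leaves the objective unchanged. The minimal error is $\|\myvec{a}_1\|_2^2+\|\myvec{a}_2\|_2^2-\lambda_1$, and this equals the smaller singular value squared, $\sigma_2^2(\mathsf{A})$.

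\textbf{Expected obstacle.} There is no genuine analytical difficulty; the only delicate point is non-uniqueness. If $\lambda_1$ is repeated, any unit vector in its eigenspace is optimal. This happens, for instance, when $\myvec{a}_1$ and $\myvec{a}_2$ are orthogonal with equal norms. The lemma should accordingly be read as saying that ``a'' dominant eigenvector attains the optimum. As an optional computational remark, the nonzero eigenvalues of $\mathsf{A}\mathsf{A}^{\dagger}$ coincide with those of the $2\times2$ matrix $\mathsf{A}^{\dagger}\mathsf{A}$. This allows $\myvec{b}^\star=\mathsf{A}\myvec{v}/\|\mathsf{A}\myvec{v}\|_2$, where $\myvec{v}$ is a dominant eigenvector of $\mathsf{A}^{\dagger}\mathsf{A}$, to be written in closed form, which matches the ``closed-form unit vector'' used by the CP algorithm.
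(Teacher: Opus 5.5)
Your proof is correct, but it takes a different route from the paper. The paper argues in one line. Writing $c_i=\mathbf{b}^{\dagger}\mathbf{a}_i$, the objective is the Frobenius error $\|\mathsf{A}-\mathbf{b}[c_1,c_2]\|_F^2$ of a rank-one approximation of $\mathsf{A}$. The optimal $\mathbf{b}$ then follows from the Eckart--Young theorem as the dominant left singular vector of $\mathsf{A}$, i.e., the dominant eigenvector of $\mathsf{A}\mathsf{A}^{\dagger}$.

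You instead use the Pythagorean identity to rewrite the objective as $\|\mathsf{A}\|_F^2-\mathbf{b}^{\dagger}\mathsf{A}\mathsf{A}^{\dagger}\mathbf{b}$. You then maximize the Rayleigh quotient directly via the spectral theorem, without any low-rank approximation theorem.

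The paper's version is shorter and places the lemma in the low-rank approximation framework. However, it leaves one step implicit: for fixed $\mathbf{b}$, the coefficients $c_i=\mathbf{b}^{\dagger}\mathbf{a}_i$ are the least-squares optimal ones. That is what makes minimizing over unit $\mathbf{b}$ equivalent to minimizing over all rank-one matrices, so that Eckart--Young applies.

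Your version is self-contained and gives several things explicitly:
\begin{itemize}
\item the optimal value $\sigma_2^2(\mathsf{A})$, which is exactly the pairwise error $e_{ij}$ that the CP algorithm compares in Step~3;
\item the non-uniqueness when the top eigenvalue is repeated;
\item the cheap closed form $\mathbf{b}^\star=\mathsf{A}\mathbf{v}/\|\mathsf{A}\mathbf{v}\|_2$ obtained through the $2\times 2$ matrix $\mathsf{A}^{\dagger}\mathsf{A}$.
\end{itemize}
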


\begin{proof}
	Denote the inner products $c_i=\myvec{b}^{\dagger}\myvec{a}_i$ for $i\in \left\{ 1,2\right\}$. Clearly, the matrix $\myvec{b}^{\star}\cdot [c_1,c_2]$ must be the best rank-one approximation of the matrix $\mathsf{A}$. Thus, the lemma can be easily concluded by using the Eckart-Young Theorem \cite{eckart1936approximation}.
\end{proof}

Based on the above lemma, the CP algorithm computes  residual errors for all pairs of two column vectors and selects the basis vector corresponding the least-residual pair. See the next subsection for detailed procedures.

\subsubsection{Implementation Procedures}
The algorithm executes in a pair-by-pair manner as follows:

\begin{itemize}
	\item[] \textbf{Input:} The equivalent channel matrix $\check{\mathsf{H}} \in \mathbb{C}^{\tau \times \check{n}}$.
	\item \textit{Step 1: Initialization.}
	Set the loop index variable $t=1$, the residual matrix as
	$\mathsf{A}^{(1)}=\check{\mathsf{H}}$, and the permutation matrix as $\mathsf{P}^{(1)}=\mathsf{I}_{\check{n}}$.
	
	\item \textit{Step 2: Combinatorial Pairing.} Enumerate all candidate unordered pairs $(\myvec{a}_{i_{t}},\myvec{a}_{j_{t}})$ of column vectors from the residual matrix $\mathsf{A}^{(t)}$, and compute the corresponding unit-norm basis vectors $b^{\star}_{i_t j_t}$ with total projection errors $e_{i_t j_t}$ by Lemma~\ref{lem:complex matrix}.
	
	\item \textit{Step 3: Basis and Permutation Update.} Search the minimum projection error $e_{i^{\ast}_{t}j^{\ast}_{t}}$ and add the corresponding basis column vector $b^{\star}_{i^{\ast}_{t}j^{\ast}_{t}}$ into the basis matrix $\mathsf{B}^{(t)}$. Then update the permutation matrix $\mathsf{P}^{(t)}$ such that $(i^{\ast}_{t},j^{\ast}_{t})$ is permutated to $(2t-1,2t)$.
	
	\item \textit{Step 4: Residual Matrix Update.} Remove the $i^{\ast}_{t}$-th and the $j^{\ast}_{t}$-th columns of the residual matrix $\mathsf{A}^{(t)}$, and update it as $\mathsf{A}^{(t+1)} \leftarrow \mathsf{A}^{(t)}-\myvec{b}\big(\big(\myvec{b}^{\star}_{i^{\ast}_{t}j^{\ast}_{t}}\big)^{\dagger}\mathsf{A}^{(t)}\big) $.
	
	\item \textit{Step 5: Iteration.}
	Set $t \leftarrow t+1$, and repeat Steps 2--4 for the loop index variable $t$ from $2$ to $\tau$.
	
	\item[] \textbf{Output:} Unitary matrix $\mathsf{B}=\mathsf{B}^{(\tau)}$, permutation matrix $\mathsf{P}=\mathsf{P}^{(\tau)}$, and coefficient matrix $\mathsf{C}=\mathsf{B}^{\dagger} \check{\mathsf{H}}\mathsf{P}$.
\end{itemize}

\subsubsection{Complexity Evaluation }

Clearly, the overall computation complexity of the CP algorithm is $O(\check{n}^2)$, which primarily stems from repeatedly invoking Lemma~\ref{lem:complex matrix} $\tau$  times.\footnote{The exact number of invocations is given as $\sum_{t=0}^{\tau-1} \binom{\check{n}-t}{2}= \frac{\tau}{2} \check{n}^2 - \frac{\tau^2}{2} \check{n} + \frac{\tau^3 - \tau}{6}$.}

\subsubsection{Numerical Example}
For the readers' better understanding of the CP algorithm as well as its performance, here we numerically compute coefficient matrices for several spatially correlated channels, respectively via the CP algorithm and the QR decomposition. 
According to the prescribed channel model in Sec.~\ref{sec:spatial_correlation} and the channel reduction method introduced in  \cite[Sec.~\Rmnum{2}-B]{chen2024twc}, we generate two realizations $\check{\mat{H}}_1$ and $\check{\mat{H}}_2$ of the equivalent channel matrix for three-receiver channels with $n=6$ and $n=1024$ RIS elements, where channel attenuation coefficients are set to $\mu_{\text{LOS}}=-25$~dB and $\mu_{\text{RR}}=\mu_{ \text{TR}}=-5$~dB, respectively. The corresponding realizations of the equivalent matrices $\check{\mathsf{H}}_1 \in \mathbb{C}^{3\times 7}$ and $\check{\mathsf{H}}_2 \in \mathbb{C}^{3\times 1025}$ are shown at the top of the next page.

For each equivalent channel matrix example $\check{\mathsf{H}}$, the QR decomposition yields a strictly upper-triangular coefficient matrix $\mathsf{R}$, \review{while the coefficient matrix $\mathsf{C}$ resulting from the proposed CP algorithm approximately exhibits an expected stepped structure as illustrated in Fig.~\ref{fig:row_echelon_proposed}}. Moreover, as the number of column vectors of the equivalent channel matrix (i.e., the number of equivalent RIS elements) increases, there is more space for the CP algorithm to substantially suppress projection residuals, which further confirms the effectiveness of our scheme.

\crh{
\subsection{DoF Analysis}
With assumptions of ideal SIC and negligible residual error, the total achievable DoFs of the GAM transceiver can be determined case-by-case:
	\begin{itemize}
		\item \textit{Case 1: $\review{\check{n}}\ge 2\tau-1$.} All $\tau-1$ phase-modulated subchannels \eqref{eq:effective_subchannel} of the GAM transceiver have $\mathsf{DoF}_{i}=1$ for all $i\in [\tau-1]$, due to their two-dimensional annular constellation geometries illustrated by Fig~\ref{fig:geometry_m_ge_2}. Thus, the total achievable DoFs of the GAM transceiver is given as
		\begin{flalign}\label{eq:DoF_GAM}
			\mathsf{DoF}_{\text{GAM}}=\tau=	\mathsf{DoF}_{\max}.
		\end{flalign}
		\item \textit{Case 2: $\tau \le \review{\check{n}}\le 2\tau-2$.} With a slight change, the GAM transceiver still can attain full DoFs. Let the first $\check{n}-\tau$ phase-modulated subchannels use the annular constellation geometry, and the others use individual phase-only modulation (corresponding to the circle constellation geometry illustrated by Fig~\ref{fig:geometry_m1}). Then the total achievable DoFs can be determined as
		\begin{flalign}
			\mathsf{DoF}_{\text{GAM}}&=(\review{\check{n}}-\tau)+\frac{1}{2}(\tau-1-(\review{\check{n}}-\tau))+1\\
			&=\frac{\review{\check{n}}+1}{2} \\
			&=\mathsf{DoF}_{\max}.
		\end{flalign} 
	\end{itemize}
	Thus, if the residual error induced by the CP algorithm is sufficiently small, the GAM transceiver has the potential of achieving full DoFs for either spatially correlated or independent channel.\footnote{We would like to remind the reader that the performance of the proposed CP algorithm significantly improves when the number of RIS element becomes large, which corresponds to strong spatial correlation and aligns with practical wireless systems. }
}

\begin{figure*}[!t]
	\[
	\scalebox{0.81}{$
		\setlength{\arraycolsep}{6pt}
		\check{\mathsf{H}}_1 =
		\begin{bmatrix}
			-2.0 + 1.6i &
			-3.6 + 0.1i &
			-3.4 - 2.8i &
			-1.3 + 2.1i &
			-2.6 + 0.7i &
			-2.0 - 2.0i &
			-0.2 - 0.1i \\
			
			-0.4 + 0.6i &
			0.1 + 0.0i &
			0.9 + 0.1i &
			-0.2 + 1.0i &
			-0.1 + 0.3i &
			0.3 + 0.2i &
			-0.6 - 0.0i \\
			
			-0.1 + 0.1i &
			0.1i &
			0.0i &
			0.2 + 0.2i &
			0.1 + 0.0i &
			0.2 - 0.1i &
			0.4 - 0.4i
		\end{bmatrix}
		$}
	\]
	
	\[
	\scalebox{0.81}{$
		\setlength{\arraycolsep}{6pt}
		\mathsf{R}_1 =
		\begin{bmatrix}
			2.7  &
			2.7 + 2.1i &
			0.7 + 3.9i &
			2.5 - 0.9i &
			2.5 + 1.0i &
			0.2 + 2.6i &
			0.2 + 0.4i \\
			
			0 &
			1.1 &
			1.7 + 1.1i &
			-0.1 - 0.2i &
			0.6+ 0.03i &
			0.8 + 0.8i &
			-0.4 - 0.2i \\
			
			0 &
			0 &
			0.2  &
			0.3i &
			0.5 + 0.2i &
			0.1 + 0.2i &
			0.4 + 0.5i
		\end{bmatrix}
		$}
	\]
	
	\[
	\scalebox{0.81}{$
		\setlength{\arraycolsep}{6pt}
		\mathsf{C}_1 =
		 \begin{bmatrix} 
			2.6 &
			2.5 - 0.9i & 
			2.7 + 2.0i & 
			0.8 + 3.8i & 
			2.5 + 1.0i & 
			0.2 + 2.6i & 
			0.2 + 0.4i \\
			
			0.07 - 0.1i &
			-0.02 + 0.1i &
			1.3 &
			1.7 + 1.3i & 
			0.7 + 0.01i & 
			0.8 + 0.9i & 
			-0.4 - 0.1i \\
			
			-0.08 - 0.08i &
			0.1 + 0.05i & 
			0.03i & 
			0.01 - 0.01i & 
			0.1 & 
			0.2 - 0.3i &
			0.6 - 0.3i 
		\end{bmatrix}
		$}
	\]
	
	\[
	\scalebox{0.81}{$
		\setlength{\arraycolsep}{6pt}
		\check{\mathsf{H}}_2 =
		\begin{bmatrix}
			0.2 + 0.4i &
			0.8 + 0.07i &
			0.4 - 0.6i &
			-0.1 - 1.6i &
			0.07 - 2.2i &
			0.8 - 2.0i &
			1.2 - 1.6i &
			\cdots \\
			
			-0.1 + 0.09i &
			-0.4 + 0.7i &
			-1.0 + 1.2i &
			-0.9 + 1.0i &
			-0.6 + 0.5i &
			-0.07 + 0.2i &
			0.3 + 0.3i &
			\cdots \\
			
			0.2 - 0.1i &
			0.1 - 0.7i &
			-0.09 - 0.9i &
			-0.5 - 0.9i &
			-0.7 - 0.8i &
			-0.6 - 0.8i &
			-0.6 - 0.7i &
			\cdots
		\end{bmatrix}
		$}
	\]

	\[
	\scalebox{0.81}{$
		\setlength{\arraycolsep}{6pt}
		\mathsf{R}_2 =
		\begin{bmatrix}
			0.6 &
			0.8 - 0.8i &
			0.4 - 1.0i &
			-0.8 - 1.1i &
			-1.4 - 1.5i &
			-1.1 - 2.0i &
			-0.7 - 2.1i &
			\cdots \\
			
			0 &
			0.5 &
			1.5 - 0.03i &
			1.9 + 0.2i &
			1.4 + 0.5i &
			0.7 + 0.5i &
			0.2 + 0.3i &
			\cdots \\
			
			0 &
			0 &
			0.07  &
			0.2 + 0.05i &
			0.2 + 0.2i &
			0.1 + 0.2i &
			0.02 + 0.09i &
			\cdots
		\end{bmatrix}
		$}
	\]
	
	\[
	\scalebox{0.81}{$
		\setlength{\arraycolsep}{4pt}
		\mathsf{C}_2 =
		\begin{bmatrix}
			1.0  &
			0.2  &
			-0.3 + 0.1i &
			-0.5 + 0.2i &
			0.03i &
			0.02i &
			\cdots \\
			
			9.5\times10^{-6} + 9.0\times10^{-6}i &
			-7.5\times10^{-5} - 4.9\times10^{-5}i &
			0.4  &
			0.3 - 0.1i &
			0.03 - 0.01i &
			-0.3 + 0.1i &
			\cdots \\
			
			1.2\times10^{-5} + 2.1\times10^{-5}i &
			-1.1\times10^{-3} - 1.2\times10^{-4}i &
			-2.0\times10^{-4} + 6.5\times10^{-5}i &
			2.13\times10^{-4} - 2.1\times10^{-4}i &
			0.1 &
			-0.03 - 0.02i &
			\cdots
		\end{bmatrix}
		$}
	\]

	\vspace{0.5em}
	\noindent\rule{\textwidth}{0.4pt}
\end{figure*}

\section{Annular Constellation With Hexagonal Lattice Points}\label{Sec.Annular Constellation}
In this section, we further consider the constellation design problem for the proposed GAM transceiver. For the $i$-th subchannel~\eqref{eq:effective_subchannel}, we define its feasible set of the equivalent transmitted signal as
\begin{flalign}
	\mathcal{A}_i \triangleq
	\big\{&
		c_{i,2i-1} \exp(j \check{\Theta}_{2i-1}) 
		+  c_{i,2i} \exp(j \check{\Theta}_{2i}) : \nonumber\\
&~\forall\, \check{\Theta}_{2i-1},\check{\Theta}_{2i}
\in [0,2\pi]	\big\}, ~\forall i\in [\tau-1].
\end{flalign}
Without loss of generality, here we assume that the $n$ real RIS elements and an extra one virtual RIS element formed by the direct path have been sorted according to the permutation matrix $\mathsf{P}$ \crh{outputted} by the CP algorithm.

It follows from Lemma~\ref{lem:achievable_magnitude} that the set $	\mathcal{A}_i$ is an \review{annulus with the inner radius $\const{r}_{ \text{in}}^{(i)}=\left|(|c_{i,2i-1}| - |c_{i,2i}|)\right|$ and the outer radius $\const{r}_{ \text{out}}^{(i)}=|c_{i,2i-1}| + |c_{i,2i}|$.}

The goal of constellation design for $\tau-1$ phase-modulated subchannels is to pack a certain number of constellation points in the \review{annuli} $\mathcal{A}_1$, $\ldots$, and $\mathcal{A}_{\tau-1}$ as densely as possible. Since the hexagonal lattice is the densest lattice packing in the two-dimensional Euclidean space, we let the joint constellation for the two phase-modulated signals $\check{\Theta}_{2i-1}$ and $\check{\Theta}_{2i}$ be
\begin{flalign}
	\mathcal{S}_i = \mathcal{A}_i \cap (\eta_i \mathcal{H})
\end{flalign}
for $i\in [\tau-1]$, where the hexagonal lattice $\mathcal{H}=\left\{
z_1+z_2 \exp(j\pi/3): z_1,z_2 \in \mathbb{Z}
\right\}$ and the scaling factor $\eta_i$ is adjusted according to the required data rate or the expected level of error tolerance.

It is well-known that the error performance of the multidimensional constellation over the AWGN channel is dominated by its minimum Euclidean distance (MED) \cite{Forney1989Multidimensional1}, especially in the high-SNR regime. For the constructed constellation $\mathcal{S}_i$, its MED is the product of the scaling factor $\eta_i$ and the MED of the hexagonal lattice $\mathcal{H}$, i.e.,
\begin{flalign}
	\mathsf{MED}(\mathcal{S}_i)=\eta_i.
\end{flalign}
Then the uncoded symbol error rate (SER) for the $i$-th phase-modulated subchannel can be approximated by the union upper bound \cite[Eq. {(8.12-1)}]{Proakis2008}
\begin{flalign} \label{eq:APSK}
		P_{e}^{(i)}\approx 6  Q
		\left( \frac{\eta_i}{\sqrt{2}} \right),
\end{flalign}
where the coefficient $6$ follows from the kissing number of the hexagonal lattice, and $Q \left( u \right)\triangleq \dfrac1{\sqrt{2\pi}}\int_u^\infty e^{-y^2/2}dy$ is the Gaussian Q-function. Note that this approximation becomes invalid when the residual error caused by the CP algorithm and SIC cannot be neglected.


\crh{
\subsection{Fast Enumeration of Constellation Points}
For the purpose of constellation mapping, we should enumerate all points in the annular constellation $\mathcal{S}$ with inner radius $\const{r}_{\text{in}}$, outer radius $\const{r}_{\text{out}}$, and the factor $\eta$ used for scaling the hexagonal lattice. The problem is equivalent to enumerating all points $\mathbf{a}$ of the hexagonal lattice $\mathcal{H}$ satisfying the norm constraint
\begin{flalign}\label{eq:norm_constraint}
\left(
\frac{\const{r}_{\text{in}}}{\eta}
\right)^2
	\le 	 \left|\mathbf{a} \right|^2
	\le 
\left(\frac{\const{r}_{\text{out}}}{\eta}
\right)^2.
\end{flalign} 

We first introduce the following definition of the \textit{theta series} of the hexagonal lattice \cite{conwaysloane99_1}
	\begin{flalign}
		\Theta_{\text{hex}}(q)&\triangleq \sum_{\myvec{a}\in \mathcal{H}}
		q^{\left| \myvec{a} \right|^2} \\
		&=
		1+6q+6q^3+6q^4+12q^7+\cdots,
	\end{flalign}
	for which we denote the coefficient of the term $q^k$ by $\const{N}_{\text{hex}}(k)$ for all $k\in \mathbb{N}$. Clearly, the quantity $\const{N}_{\text{hex}}(k)$ also represents the number of hexagonal lattice points of the modulus $\sqrt{k}$ and can be specified by
	\begin{flalign}
		\const{N}_{\text{hex}}(k)=6\bigg(
		\prod_{p \vert k} \chi(p)
		\bigg)
	\end{flalign}
	for any $k\ge 2$, where 
	\begin{flalign}\label{eq:negative_moment}
		\chi(p) =\begin{cases}
			1&,~p=3,\\
			\nu_p(k)+1&,~p=1~ (\text{mod } 3),\\
			0&,~p=2~ (\text{mod }  3) \text{ and odd } \nu_p(k),\\
			1&,~p=2~ (\text{mod }  3) \text{ and even } \nu_p(k),
		\end{cases}
	\end{flalign} 
	and $p$ is the distinct prime factor of $k$ with multiplicity $\nu_p(k)$.
	
	Hence, we immediately compute the size of the annular constellation $\mathcal{S}$ as
	\begin{flalign}
		\left|  
		\mathcal{S}
		\right|= 
			\const{N}_{\text{hex}}\left(
		\bigg\lfloor 
		\left(
		\frac{\const{r}_{\text{out}}}{\eta}
		\right)^2
		\bigg\rfloor
		\right)
		-
	\const{N}_{\text{hex}}\left(
	\bigg\lceil 
	\left(
	\frac{\const{r}_{\text{in}}}{\eta}
	\right)^2
	\bigg\rceil
	\right).
	\end{flalign}
	Then all hexagonal lattice points $z_1+z_2\exp(j\pi/3)$ satisfying the norm constraint~\eqref{eq:norm_constraint} can be characterized by two simple steps
	
	\begin{itemize}
		\item \textit{Step 1.}
		\review{
		For each integer $k\in [	\lfloor 
		\left(
		{\const{r}_{\text{in}}}/{\eta}
		\right)^2
		\rfloor,\lceil 
		\left(
		{\const{r}_{\text{out}}}/{\eta}
		\right)^2
		\rceil]$,} search for valid $k$ satisfying  $\const{N}_{\text{hex}}(k)\neq 0$.
	
		\item \textit{Step 2.} Numerically solve the Diophantine equation $z_1^2+z_2^2+z_1z_2 =k$.
	\end{itemize}

\subsection{Fast Decomposition of Constellation Points}
}
Given a targeted constellation point $s=\rho \exp(j\phi)\in \mathcal{S}_i$ for the $i$-th phase-modulated subchannel \eqref{eq:effective_subchannel}, it is required for signaling purposes to decompose $\rho \exp(j\phi)$ into two deterministic phase values $\check{\theta}_{2i-1}$ and $\check{\theta}_{2i}$ satisfying
\begin{flalign}\label{eq:decomposition}
	\rho\exp(j\phi)=c_{i,2i-1}\exp(j \check{\theta}_{2i-1}) + c_{i,2i}\exp(j \check{\theta}_{2i}).
\end{flalign}  
A simple decomposition method is given as follows.

Define $\varphi \triangleq \angle c_{i,2i} - \angle c_{i,2i-1} $. Then, by the law of cosines, we have the phase difference
\begin{flalign}
\Delta&\triangleq	\left|
	\check{\theta}_{2i}+\varphi -\check{\theta}_{2i-1} 
	\right|  \nonumber\\
	&=
	\arccos \left(\frac{\rho^2 - |c_{i,2i-1}|^2 - |c_{i,2i}|^2}{2|c_{i,2i-1}|\cdot|c_{i,2i}|}\right)
	  \in [0,\pi].
\end{flalign}
Without loss of generality, we can let $	\check{\theta}_{2i}= \Delta+\check{\theta}_{2i-1} -\varphi$ with $\check{\theta}_{2i-1}\in [0,2\pi]$. Substituting it into~\eqref{eq:decomposition}, we have the solution
\begin{flalign}
	\check{\theta}_{2i-1}=\angle
	\left(
	\frac{\rho \exp(j\phi)}{c_{i,2i-1}+c_{i,2i}\exp(j(\Delta-\varphi))}
	\right).
\end{flalign}

\section{Simulation Results}\label{Sec.Simulation result}

In this section, we present numerical results to evaluate the performance of the CP algorithm and the annular constellation proposed in our paper. We consider an RIS-assisted symbiotic communication system with a single-antenna transmitter and a four-antenna receiver (corresponding to $\tau=4$ in probability), where the planar $32\times 32$ RIS is deployed on a two-dimensional equally-spaced rectangular grid with inter-element spacing $d={\lambda}/{2}$, ${\lambda}/{4}$, or ${\lambda}/{8}$. The channel attenuation coefficients are set to $\mu_{\text{LOS}}=-60$ dB and $\mu_{\text{RR}}=\mu_{\text{TR}}=-5$ dB, \review{which correspond to the scenario with a weak line-of-sight link.}

\subsection{Residual Error Performance of Matrix Decomposition Algorithms}

In this subsection, we investigate the residual error performance of the CP algorithm, with the two heuristic algorithms in Sec.~\ref{sec:Heuristic} serving as the benchmarks. For a coefficient matrix $\mat{C}$ obtained by some matrix decomposition algorithm, we consider its \textit{relative residual error} (RRE) defined as follows
\begin{flalign}
	\text{RRE}=\frac{\sum_{i=2}^{\tau}
	\sum_{j=1}^{2(i-1)} c_{ij}^2}{\left\| \check{\mat{H}} \right\|_{F}^2}
	=\frac{\sum_{i=2}^{\tau}
		\sum_{j=1}^{2(i-1)} c_{ij}^2}{\left\| \check{\mat{C}} \right\|_{F}^2},
\end{flalign}
where $\left\| \cdot \right\|_{F}$ denotes the Frobenius norm.

\begin{figure}[!htp]
	\centering
	\includegraphics[width=0.95\columnwidth]{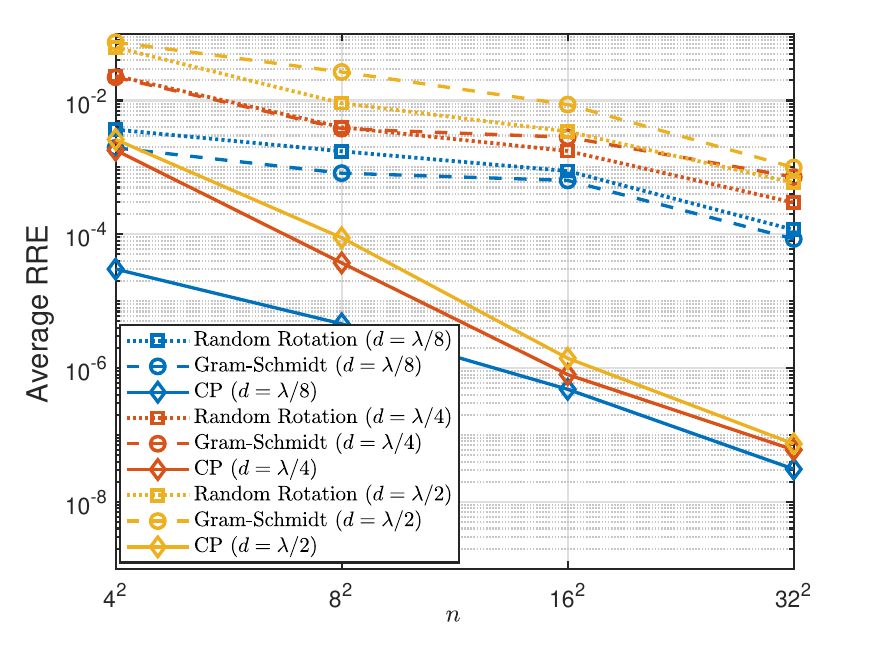}
	\caption{\crh{Relative residual errors of the CP and two heuristic algorithms}.}
	\label{fig.Algorithm_comparison}
\end{figure}

In Fig.~\ref{fig.Algorithm_comparison}, averaged over $1000$ realizations of the equivalent channel matrix $\check{\mathsf{H}}$, we plot the RRE of the CP, the random rotation, and the Gram-Schmidt orthogonalization algorithms for various values of the element number $n$ and the inter-element spacing $d$. 
It can be observed that, at each inter-element spacing $d=\lambda/8$, $\lambda/4$, or $\lambda/2$, the average RRE of the CP algorithm is approximately four orders of magnitude lower than those of the two heuristic algorithms when the number of elements $n$ is sufficiently large (i.e., $n=1024$). 
An interesting phenomenon is that, as the number of RIS elements increases, the performance of the CP algorithm improves to a similar level, regardless of the inter-element spacing $d$. 
This is counterintuitive, since the effect of spatial correlation becomes more pronounced at smaller inter-element spacings.
This phenomenon may be attributed to the expansion of the search space of the CP algorithm when the rank of  the equivalent channel matrix remains fixed while the number of RIS elements increases, thereby \crh{facilitating the search of two nearly-collinear} column vectors in each residual matrix $\mat{A}^{(t)}$ for $t\in [\tau]$. 
The robustness and superiority of the CP algorithm are confirmed by the above numerical results, with its performance advantage becoming particularly prominent in the scenario with a large-scale  RIS.

\subsection{Error Performance of GAM Transceivers}
In this subsection, we turn to the error performance of the GAM transceiver with the annular constellation, using the QR-SIC transceiver with the PSK modulation proposed in \cite{chen2024twc} as the benchmark. 
For implementation simplicity, we only focus on $\tau-1$ phase-modulated subchannels~\eqref{eq:subchannel}, and the phase-rotated input signal $\check{\mathbf{X}}$ to be demodulated from the beamforming subchannel~\eqref{eq:beamforming_subchannel} is assumed to have unit norm, i.e., $\left\|\check{\mathbf{X}}\right\|_2=1$. 
Besides, the inter-element spacing $d$ of the used RIS is fixed to be $\lambda/8$.

In Fig.~\ref{Fig.Received_constellation}, for a randomly-chosen channel, \review{we plot the annular constellation, the PSK constellation,} and the corresponding received constellation points under the condition $\snr=49$~dB. For a fair comparison, we let three annular and  three PSK constellations have the same received MED, which is appropriately chosen according to the SNR. 
The corresponding constellation parameters are summarized in Table~\ref{tab:simulation_parameters}, where the modulation order is expressed in bits per symbol (bps). 
It can be seen from Fig.~\ref{Fig.Received_constellation} that the radius of the PSK constellation for the $i$-th subchannel decreases as $i$ increases, since, by the convention of the SVD, the main diagonal element $r_{i,i}$ of the coefficient matrix $\mat{R}$ is monotonically decreasing. Table~\ref{tab:simulation_parameters} shows that the proposed GAM transceiver achieves a $41.3\%$ higher total transmission rate than the QR-SIC transceiver while maintaining the same MED.

\begin{figure}[!htp]
	\centering
	\subfigure[Subchannel 1 (GAM)]
	{\label{Fig.SER h1}\includegraphics[width=0.24\textwidth]{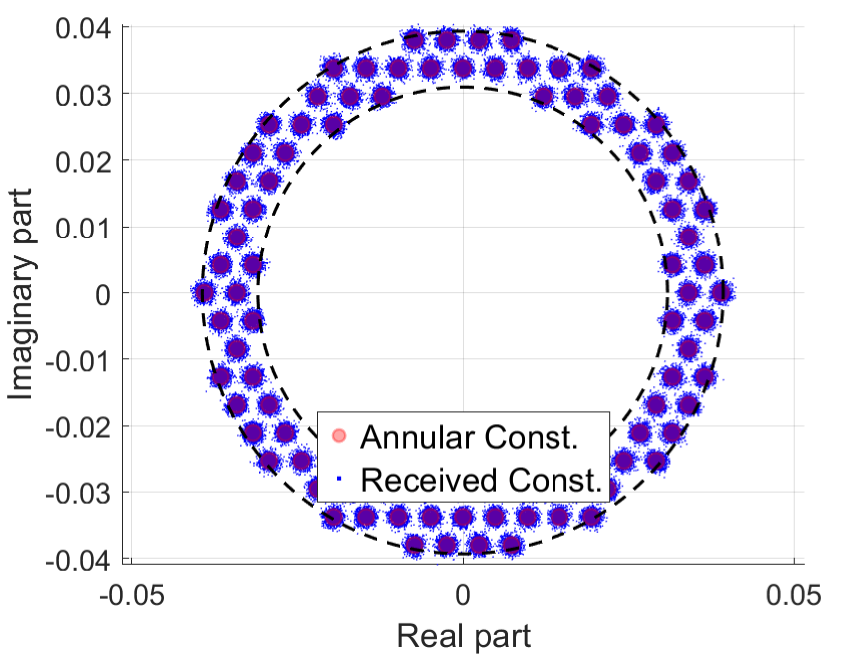}}
	\subfigure[Subchannel 1 (QR-SIC)]
	{\includegraphics[width=0.24\textwidth]{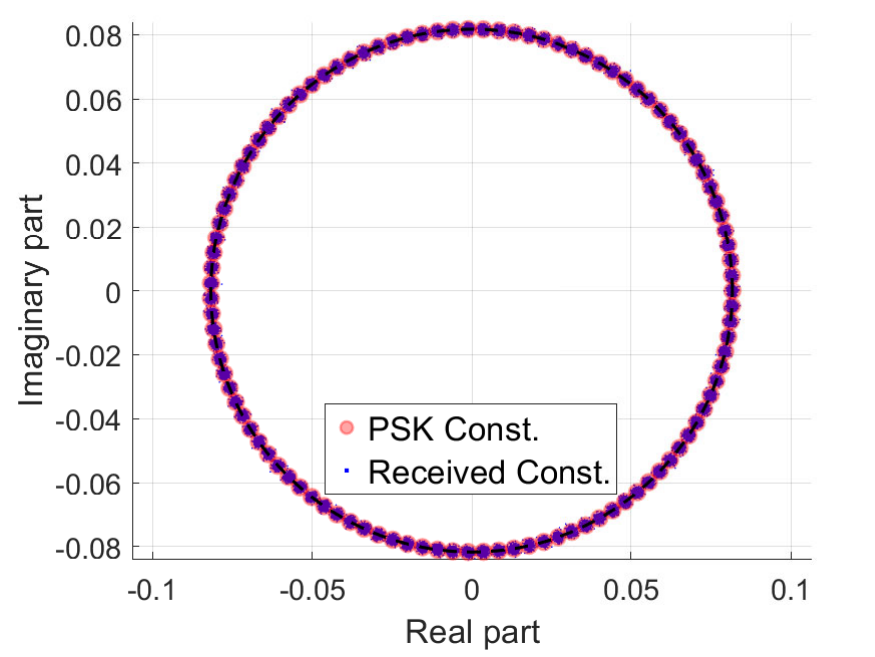}}
	\subfigure[Subchannel 2 (GAM)]
	{\includegraphics[width=0.24\textwidth]{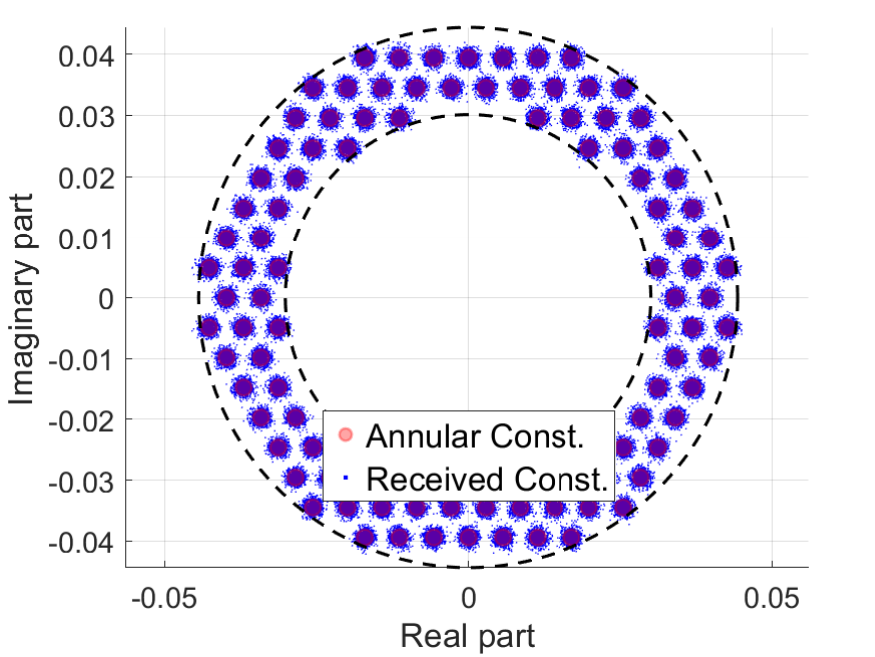}}
	\subfigure[Subchannel 2 (QR-SIC)]
	{\includegraphics[width=0.24\textwidth]{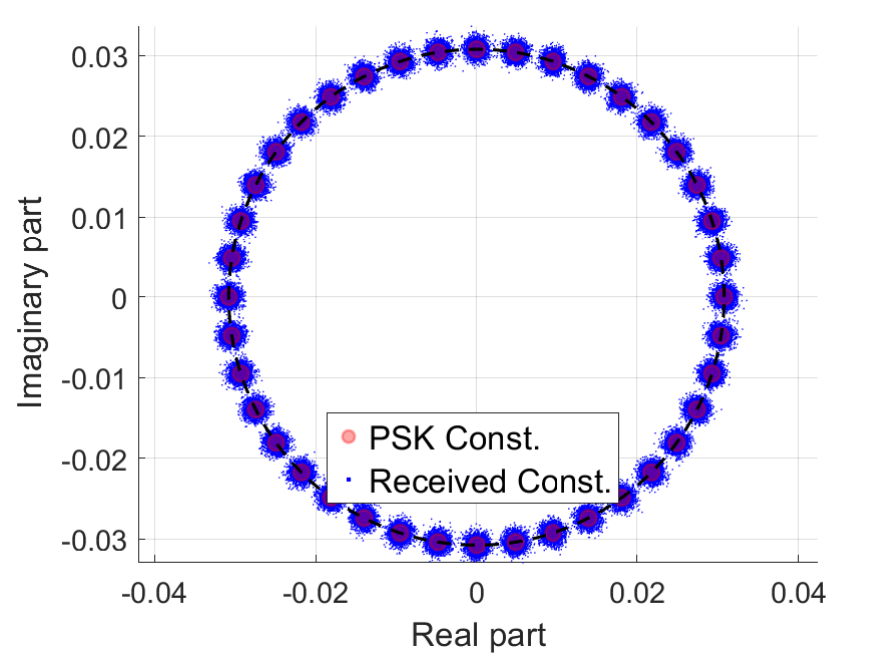}}
	\subfigure[Subchannel 3 (GAM)]
	{\includegraphics[width=0.24\textwidth]{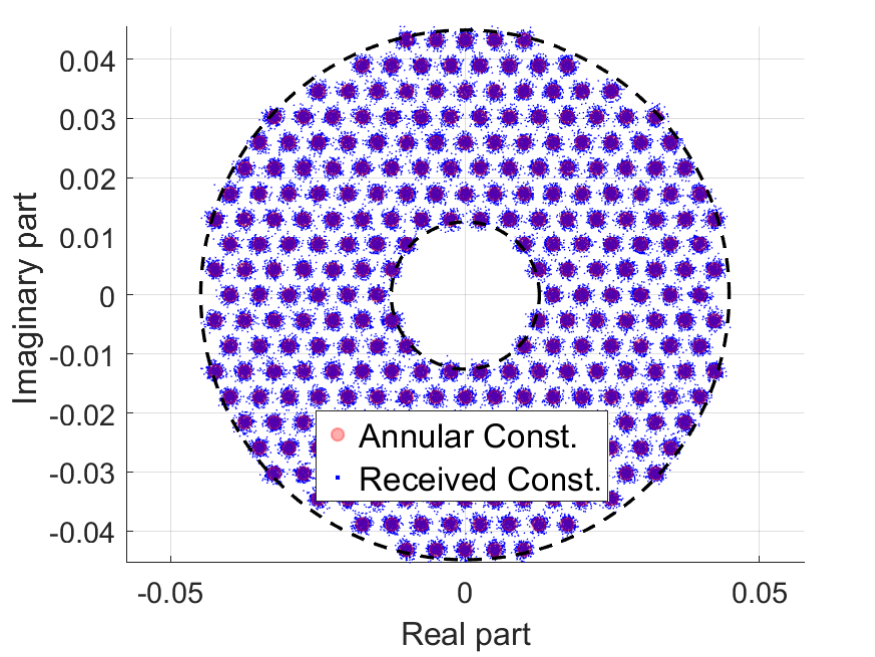}}
	\subfigure[Subchannel 3 (QR-SIC)]
	{\includegraphics[width=0.24\textwidth]{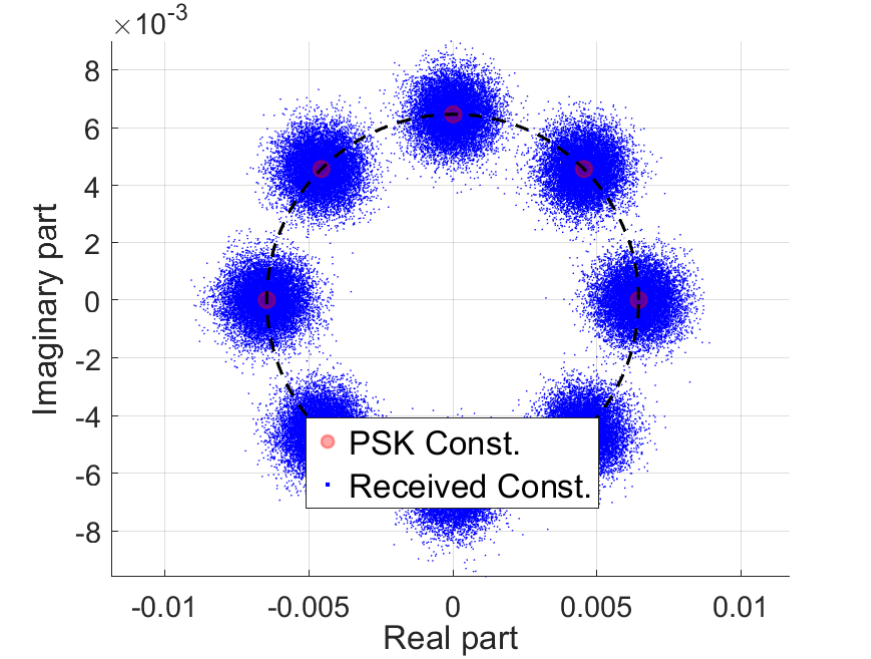}}
	\caption{Constellation points of the GAM and the QR-SIC transceivers.}
	\label{Fig.Received_constellation}
\end{figure}

\begin{table}[t]
	\centering
	\caption{Constellation parameters of the GAM and the QR-SIC transceivers.}
	\label{tab:simulation_parameters}
	\begin{tabular}{ccccc}
		\toprule
		\multirow{2}{*}{\rule{0pt}{12pt} Subchannel} &  \multicolumn{2}{c}{Annular Constellation} & 
		\multicolumn{2}{c}{PSK Constellation}\\
		\cmidrule{2-3} \cmidrule{4-5}
		  & Cardinality & Modul. Order & Cardinality & Modul. Order
		 \\
		\midrule
        1 & 90 &  6.49 bps & 107 & 6.74 bps \\
		2 & 102 & 6.67 bps & 40 & 5.32 bps\\
		3 & 278 & 8.12 bps & 8 & 3 bps\\
		Total & 2552040 & 21.28 bps & 34240&  15.06 bps\\
		\bottomrule
	\end{tabular}
\end{table}

In Fig.~\ref{Fig.weighted_SER}, we plot uncoded SERs of individual phase-modulated subchannels for the GAM and the QR-SIC transceivers with the constellations illustrated in Fig.~\ref{Fig.Received_constellation}. 
For theoretical reference, we use an exact integral expression \cite[Eq. {(4.3-12)}]{Proakis2008} to evaluate SERs of the QR-SIC transceiver, which uses the PSK constellation in each phase-modulated subchannel. For the annular constellation, its theoretical SER is approximated by Eq.~\eqref{eq:APSK}.\review{To ensure the generality of the results, each SER curve in Fig.~\ref{Fig.average_weighted_SER} is obtained by averaging over 1000 realizations of the equivalent channel matrix $\check{\mathsf{H}}$.	
Although slight deviations can be observed due to RRE, the overall performance trend remains consistent with that in Fig.~\ref{Fig.weighted_SER}.} 
Under this condition, the proposed GAM transceiver achieves a 45.89\% higher total transmission rate than the QR-SIC transceiver.
It is noteworthy that the SER performance of the GAM transceiver is slightly worse than that of the QR-SIC transceiver. 
This is because the annular constellation has a larger error coefficient (i.e., the number of nearest neighboring constellation points) as compared to the PSK constellation. 
The small gap will diminish with increasing SNR, as the impact of the error coefficient becomes negligible in the high-SNR regime.

\section{Conclusion}
\label{Sec.Conclusion}
To address the limitation of existing RIS-assisted symbiotic communication systems in achieving full DoFs under spatially correlated channels, we proposed a novel transceiver architecture by introducing a special matrix decomposition method. To the authors' best knowledge, the proposed architecture has the potential to achieve this objective for the first time.
Based on the hexagonal lattice, we proposed an annular constellation design, which perfectly matches the constellation geometry of the proposed transceiver.
Simulation results demonstrated that the proposed transceiver has significantly higher spectral efficiency than the existing transceiver architecture under the condition of comparable error performance. 
In addition, we also provided  a theoretical performance evaluation for the proposed transceiver, which shows good agreement with the numerical results.
Future work will focus on developing more practical modeling approaches for RISs and designing more flexible and higher-performance transceiver architectures, aiming to further improve system performance and enhance applicability in real-world scenarios.

\begin{figure}[!htp]
	\centering
	\includegraphics[width=0.95\columnwidth]{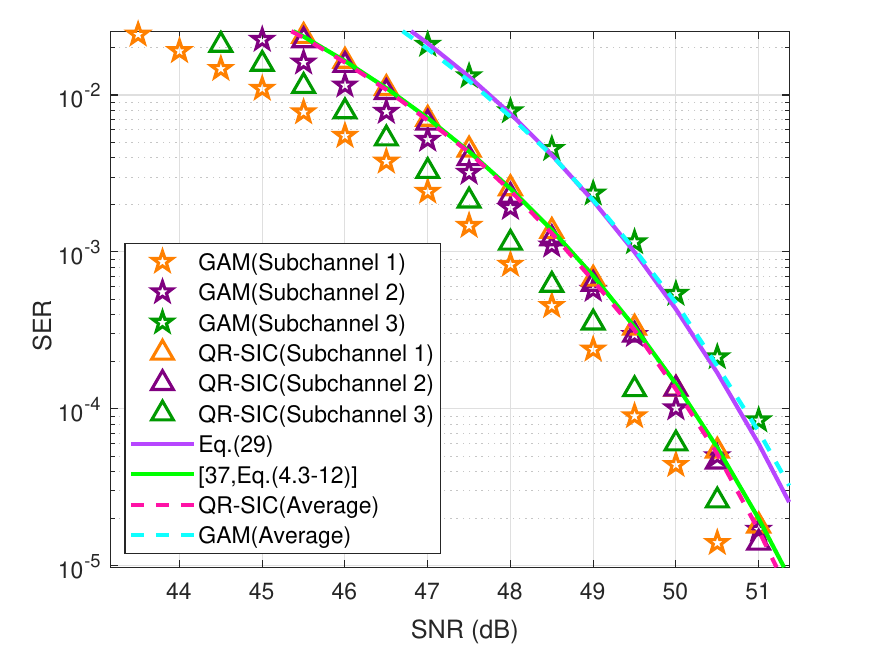}
	\caption{SER curves of the GAM and the QR-SIC transceivers with constellations illustrated in Fig.~\ref{Fig.Received_constellation}.}
	\label{Fig.weighted_SER}
\end{figure}

%

\begin{figure}[!t]
	\centering
	\includegraphics[width=0.95\columnwidth]{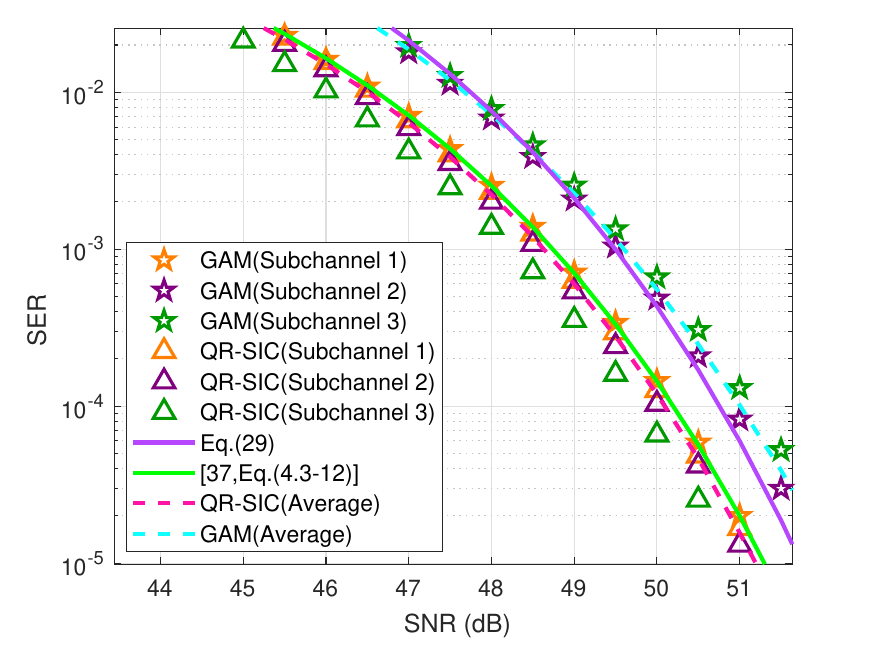}
	\caption{\review{SER curves of the GAM and the QR-SIC transceivers averaged over $1000$ channel realizations.}}
	\label{Fig.average_weighted_SER}
\end{figure}

\bibliographystyle{IEEEtran}
\bibliography{library}


\end{document}